\documentclass[preprint, 12pt]{elsarticle}

\usepackage[utf8]{inputenc}
\usepackage[T1]{fontenc}
\usepackage{graphicx}
\usepackage{amssymb,amsthm,mathtools}
\usepackage{microtype}
\usepackage{bm}
\usepackage{booktabs} 
\usepackage{siunitx}
\usepackage{xcolor} 
\usepackage{hyperref}
\usepackage[margin=2.5cm]{geometry} 

\newtheorem{theorem}{Theorem}[section]

\newtheorem{remark}[theorem]{Remark}

\newcommand{\R}{\mathbb{R}}

\newcommand{\beq}{\begin{equation}}
\newcommand{\eeq}{\end{equation}}
\newcommand{\order}[1]{\mathcal{O}(#1)}
\newcommand{\supp}{\text{supp}}

\journal{Chaos, Solitons \& Fractals}

\begin{document}

\begin{frontmatter}

\title{Uniform Asymptotics of the Pseudo Wigner–Ville Distribution for Nonlinear Chirps}


\author[aff1,aff2]{Vincenzo Pierro\corref{cor1}}
\ead{pierro@unisannio.it}
\cortext[cor1]{Corresponding author}

\author[aff3]{Maurizio Feo}

\author[aff4]{Maurizio Ricciardi}

\author[aff5]{Rocco P. Croce}

\author[]{Theo Demma\fnref{fn1}}
\fntext[fn1]{Deceased}

\author[aff2]{Innocenzo M. Pinto}

\author[aff6]{Paolo Addesso}

\address[aff1]{Dipartimento di Ingegneria, Università del Sannio, Corso Garibaldi 107, I-82100 Benevento, Italy}

\address[aff2]{INFN, Sezione di Napoli, Gruppo Collegato di Salerno, I-84084 Fisciano, Salerno, Italy}

\address[aff3]{Liceo Statale "Alfano I", I-84132 Salerno, Italy}

\address[aff4]{European Patent Office, 80335 Munich, Germany}

\address[aff5]{Dipartimento di Ingegneria Industriale, Università degli Studi di Salerno, I-84084, Fisciano (SA), Italy}

\address[aff6]{Dipartimento di Ingegneria dell’Informazione, Elettrica e Matematica Applicata, Università degli Studi di Salerno, I-84084, Fisciano (SA), Italy}


\begin{abstract}
The analysis of non-stationary signals in complex physical systems often relies on Time-Frequency distributions. Among these, the Pseudo Wigner–Ville Distribution (PWVD) stands out for its superior resolution but is mathematically challenging due to its inherent quadratic nonlinearity.
 This nonlinearity generates complex interference artifacts and cross-terms in the phase space, 
potentially obscuring the physical features of the signal, particularly for nonlinear chirps.
In this work, we establish { a mathematically grounded framework for the PWVD for general windowed nonlinear chirps}. By leveraging the theory of oscillatory integrals with coalescing stationary points, we derive a uniform asymptotic expansion that bridges the gap between heuristic signal processing and semiclassical geometric approaches (Berry’s chord construction). The resulting closed-form representation, expressed in terms of symmetric incomplete Airy functions, provides a unified description of the nonlinear transform's behavior, regularizing the transition across the instantaneous frequency caustics.
While the framework is general, { we show its power on two illustrative examples: the high‑precision nonlinear chirps of coalescing binaries in gravitational‑wave astronomy} and 
 radar nonlinear chirps for pulse compression applications.
The analytical results successfully predict the structure of interference patterns and quantify the systematic bias in peak-based frequency estimation. 
Therefore, {this study establishes a systematic bridge} between nonlinear mathematical analysis and precision experimental physics, validating the PWVD as a robust tool for detailed source characterization in high-signal-to-noise regimes.
\end{abstract}

\begin{keyword}
Nonlinear Transform\sep Wigner-Ville Distribution \sep Uniform Asymptotics \sep Nonlinear Chirp Signals \sep Time-Frequency Analysis
\end{keyword}

\end{frontmatter}

\section{Introduction}
The study of nonlinear dynamics in complex systems—ranging from fluid turbulence to biological rhythms and astrophysical phenomena—requires robust tools to characterize non-stationary signals whose frequency content evolves over time ~\cite{KantzSchreiber, Strogatz}. 
Time-Frequency (TF) analysis is considered one of the most suitable frameworks to face this problem~\cite{FlandrinBook1998, FlandrinBook2018, boashashbook, Seo2025, Warion2024, Ugarte2023} and the Wigner-Ville Distribution (WVD) and its windowed counterpart, the Pseudo Wigner-Ville Distribution (PWVD)~\cite{Wigner1932, Ville1948}, represent a cornerstone in this field. 
Originally conceived in quantum mechanics to describe phase-space probabilities, the Wigner distribution is a bilinear (quadratic) transform that offers the highest possible time-frequency resolution~\cite{Cohen1995}. 
However, its application to nonlinear signals is non-trivial: the very nonlinearity of the transform that grants its resolution also introduces intricate interference structures (cross-terms) and oscillatory artifacts~\cite{Mecklen1, Mecklen2, TFTB}.

To mitigate these inherent cross-terms and artifacts, often at the expense of a sharp TF representation, other transforms, often referred to as \emph{smoothed} WVD, 
have been introduced, identifying the wide Cohen's class distribution \cite{FlandrinBook2018}, which also includes the WVD as a special case. 
Moreover, modern and sharper techniques—such as Time-Frequency reassignment and the Synchrosqueezing Transform (SST) \cite{Daubechies2011, Auger2013, 
FlandrinBook2018} -- are frequently employed too. 
These methods do not constitute standalone analytical transforms; 
rather, they are complex algorithmic procedures designed to reallocate the signal's energy across the phase space (it is worth noting that SST itself operates essentially as a highly advanced, phase-preserving variation of reassignment). 
In this paper, our goal is to provide a fundamental, closed-form mathematical description of the underlying unperturbed PWVD. 
We believe that our uniform asymptotic framework can serve as the theoretical foundation necessary to justify these algorithms.

In the context of nonlinear science, the aforementioned artifacts are not merely {\it noise} to be discarded,
 but deterministically generated features arising from the geometry of the signal in the phase space. 

Interpreting these features requires a mathematical framework capable of handling the singularities that emerge when the transform is applied to 
nonlinear chirps~\cite{chassande1999, Boashash1992part1}. 
This challenge is interdisciplinary by nature, connecting the asymptotic analysis of diffraction integrals in optics, the semiclassical approximations in quantum mechanics (notably Berry's geometric chord approach~\cite{Berry1977}), and modern signal processing.

In fact, the asymptotic analysis of Wigner distributions has a rich history across different physics and engineering domains. 
In standard signal processing, classical Stationary Phase Methods (SPM)~\cite{Erdelyi} have been extensively used to characterize the time-frequency ridges of chirp signals~\cite{Stankovic2001, Flandrin1999_SPM}. 
However, these classical approximations inherently diverge at the inflection points of the frequency track (caustics), where the second derivative of the phase vanishes.
Concurrently, in semiclassical quantum mechanics, the WKB~\cite{BenderOrszag} limit of the Wigner function has been profoundly analyzed using catastrophe theory 
\cite{Balazs1973, Berry1977}. While these foundational works successfully describe the infinite-domain behavior of phase-space distributions, 
developing a \textit{uniform} asymptotic theory specifically tailored to the windowed PWVD—which must mathematically account for finite boundary
truncation effects across the caustics—has remained an open challenge.

The refinement of these nonlinear analysis tools is key for high precision estimation of instantaneous frequency, which is of paramount importance in several fields of science and technology. Some examples are listed below.
\begin{itemize}

\item Biomedical signals, which are inherently non-stationary and require high precision TF methods in order to improve the diagnostic effectiveness~\cite{Wacker2013, boashashbook}. Indeed, they are essential in neurology for the analysis of electroencephalograms and magnetoencephalographies  in order to predict and/or identify abnormalities~\cite{Boashash2018}, such as epileptic seizures, or to properly design brain-computer interfaces~\cite{PachoriBook2023}.  They are also widely used in cardiorespiratory studies~\cite{Wacker2013, Tirtom2008}, in order to allow scientists to quantify both brief and sustained changes in heart rate, breathing, and blood pressure, and (in general) they are key in several diagnostic techniques, ranging from breast ultrasound image classification~\cite{Xiong2023} to the monitoring of the surface electromyography for muscle and neuromuscular evaluations in the prevention and rehabilitation fields~\cite{Muceli2024}.

\item Mechanical fault diagnosis and monitoring is a critical task. Indeed, industrial operations rely heavily on moving components like shafts, spindles, planetary gearboxes, rolling bearings and flywheels, and a failure in any single part can lead to expensive downtime and significant financial losses~\cite{PachoriBook2023, Li2025, Gelman2007,Jiang2023,Zhao2024}. 
Therefore, in order to properly detect and classify these issues, the machine's vibration signals, that are strongly non-stationary, are analyzed by using high precision TF techniques.

\item Gravitational waves (GW) emitted by coalescing binaries are pristine examples of nonlinear chirps, where the instantaneous frequency evolves rapidly over time. Following the groundbreaking discoveries by Virgo~\cite{Virgosite}, LIGO~\cite{LIGOsite}, and KAGRA~\cite{KAGRA}, and with the prospect of third-generation observatories like the Einstein Telescope~\cite{ET} and Cosmic Explorer~\cite{CExplore}, we are entering an era of high Signal-to-Noise Ratio (SNR) detections,
as exemplified by the recent observation of GW250114~\cite{GW250114}. 
In this high SNR regime ($\sim 80$)  precision tests of fundamental physics become possible and subtle physical effects can be identified 
even in the absence of fully accurate waveform models~\cite{Bohe2017,Pratten2021,Varma2019}.

\item Other relevant fields are related to the analysis of seismic signals~\cite{WangBook2023}, the analysis of nonlinear and non-stationary systems in fluid dynamics
 (e.g. ocean and rogue waves~\cite{Osborne1995,Hwang2003, Tikan2022} or climate analysis~\cite{MA2015}), the study of nonlinear optical systems~\cite{Mandal2004},
  nonlinear dynamics in plasma physics~\cite{Wang2022,Wang2024}, { nonlinear radar chirps for pulse-compression applications \cite{Roy2021NLC}} 
  and many other fields: the interested reader can find an (almost) complete list within the Flandrin's book~\cite{FlandrinBook2018}.

\end{itemize}

Popular TF methods, such as the wavelet-like Q-transform~\cite{QtransformRef}, avoid interference but are not capable to achieve the high precision of the PWVD, which is required for future scientific and technological objectives. 
Indeed, its resolution, like that of the aforementioned TF representation, is bounded by the uncertainty principle; 
however, the PWVD can achieve a resolution closer to the Cram\'er-Rao bound~\cite{Boashash1992part2, Boashash1992_XWVD, Boashash1990_TFA}, provided that its nonlinear 
artifacts are analytically tamed.

In this work, we derive a uniform asymptotic expression for the PWVD of general windowed nonlinear chirps. 
Our approach is grounded in the theory of oscillatory integrals with two coalescing stationary points, pioneered by Chester, Friedman, and Ursell~\cite{chester1957}, and refined by Bleinstein~\cite{Bleistein1966}. 

While the foundational mathematical machinery was present in applied mathematics, 
its exact translation to the finite-duration nonlinear chirps in Time-Frequency analysis is genuinely novel. 
Our core contribution lies in adapting this classical theory to the specific topological structure of the windowed PWVD. 
By explicitly mapping the finite integration limits to symmetric incomplete Airy functions, we provide a mathematical justification 
for the amplitude pre-factors and boundary truncation effects that are absent in purely geometric or infinite-domain approaches.

We show that the PWVD behavior is universally governed by incomplete Airy functions~\cite{Abramowitz1964}, providing a physical explanation for the {\it ringing} artifacts observed in numerical analyses. 
This framework unifies three perspectives: the heuristic signal processing view, the rigorous asymptotic analysis~\cite{Erdelyi}, and the semiclassical geometric interpretation. 
Specifically, we demonstrate that in the short-chord limit, our uniform expansion converges to the local heuristic formula and aligns 
with Berry's semiclassical chord construction~\cite{Berry1977}, moving the heuristic approach from a convenient simplification to a mathematically justified result.

From a practical standpoint, this work yields a powerful analytical tool for high-precision signal analysis. 
We apply our asymptotic formula to derive an analytical expression for the systematic bias inherent in peak-based instantaneous frequency estimation, linking the bias directly to the local curvature of the signal's time-frequency track. 
To validate this interdisciplinary framework, we consider a zero-order post-Newtonian model~\cite{Sathya} of a whitened GW signal. 
The excellent agreement between our asymptotic formula and direct numerical computations confirms that a deep mathematical understanding of the nonlinear transform allows for the precise removal of systematic biases.

The paper is structured as follows. After establishing the preliminaries in Sec.~\ref{sec:preliminaries} and the nonlinear chirp model in Sec.~\ref{sec:modeling}, we build intuition with a heuristic derivation in Sec.~\ref{sec:heuri}. 
Section~\ref{sec:semiclassical} places this result within Berry's semiclassical geometric framework. 
Section~\ref{sec:uniform} presents the core contribution: the derivation of the uniform asymptotic formula. 
The practical implications are explored in Sec.~\ref{sec:bias}\,\, i.e. the systematic bias quantification, Sec.~\ref{sec:GWchirp}\,\, i.e. the numerical validation on GW chirps, { and Sec.~\ref{sec:ConcaveChirp}\,\, i.e. the numerical validation on radar nonlinear chirps.}
Finally, Sec.~\ref{sec:conclusions} summarizes our findings.
\section{Definitions and Preliminaries}
\label{sec:preliminaries}

In this section we introduce the cross pseudo-Wigner--Ville distribution (X-PWVD), its analytic-signal formulation, and the suppression of spurious components 
(see \cite{boashashbook} for a comprehensive reference).

Given two complex-valued signals $x(t)$ and $y(t)$, the X-PWVD is defined as
\begin{equation}
PW_{xy}(t,f) =
\int_{-T}^{T}
w\left(\frac{\tau}{2}\right)^2
x\left(t+\frac{\tau}{2}\right)
y^{\ast} \left(t-\frac{\tau}{2}\right)
e^{-\mathrm{i} 2\pi f \tau}\, d\tau,
\label{eq:xpwvd}
\end{equation}
where \( w(\tau) \) is a real-valued symmetric window function defined in the interval $[-T/2, T/2]$ and \( (\cdot)^{*} \) denotes complex conjugation. The choice of \( w(\tau) \) controls the trade-off between 
time and frequency resolution, mitigating cross-term artifacts in the distribution. 
The X-PWVD reduces to the standard PWVD when $x(t) = y(t)$. 
Furthermore, when the windowing function is absent and the integration is performed 
over an infinite interval, this simplifies to the classic 
Wigner-Ville distribution (WVD) used in quantum mechanics \cite{Wigner1932}.

Throughout this work we use the Fourier transform convention
\begin{equation}
\mathcal{F}\{g(t)\}(f) = \int_{-\infty}^{+\infty} g(t) e^{-\mathrm{i} 2\pi f t} \, dt,
\end{equation}
and adopt the shorthand notation \( \hat{g}(f) = \mathcal{F}\{g(t)\}(f) \). Time is denoted by \( t \), frequency by \( f \), and lag variables by \( \tau \).

For real-valued signals, it is convenient to introduce the analytic signal
\begin{equation}
s_a(t) = s(t) + \mathrm{i} \mathcal{H}\{s(t)\},
\label{eq:analytic}
\end{equation}
where $\mathcal{H}\{\cdot\}$ is the Hilbert transform. In the frequency domain, its Fourier transform can be written as
\begin{equation}
\mathcal{F}\{s_a(t)\}(f) = 2u(f)\,\hat{s}(f),
\label{eq:analytic-spectrum}
\end{equation}
with $u(f)$ the Heaviside step function, showing that only positive frequencies are retained.

Using this representation, the decomposition of the X-PWVD into positive- and negative-frequency contributions simplifies. 
In particular, only the correlation between positive-frequency components is physically meaningful. When analytic signals $x_a(t)$ and $y_a(t)$ are used, 
all spurious terms vanish, and the distribution reduces to
\begin{equation}
PW_{x_a,y_a}(t,f) = 4\, PW_{x^+,y^+}(t,f),
\label{eq:analytic-xpwvd}
\end{equation}
where  \( x^{+}(t) \) and \( y^{+}(t) \) contain only positive frequencies of real signal. 
The mirror contribution arising from the terms $x^{-}(t)$ and $y^{-}(t)$ with respect to the frequency origin $f=0$, as well as the cross-interference terms between opposite frequency components, are suppressed.
Equation (\ref{eq:analytic-xpwvd}) provides  a clean representation of the relevant correlation.

To summarize, the analytic-signal formulation of the X-PWVD suppresses redundant mirror terms and cross-interference components that arise in the real-signal case. 
This yields a time-frequency distribution focused exclusively on positive-frequency 
interactions, simplifying interpretation and enhancing physical relevance, which motivates 
the wide use of the X-PWVD of the analytic signal in applications  \cite{boashashbook, Cohen1995}.

\section{Modeling Chirp Signals for Time-Frequency Analysis}
\label{sec:modeling}

This work focuses on the analysis of nonlinear chirp signals, which are characterized by an instantaneous frequency that varies over time.
A key challenge is to establish a well-defined model for such signals, particularly their representation as an analytic signal from 
which instantaneous amplitude and phase can be extracted. We explore two complementary approaches: 
a constructive method based on asymptotic expansions and a foundational analysis based on spectral conditions.

\subsection{Asymptotic Chirps: A Constructive Approach}
In many signal processing applications, a key class of signals are \emph{chirps}, which are characterized by a time-varying instantaneous frequency. Mathematically, these are AM-FM signals of the form
\begin{equation}
    s(t) = a(t) \cos(\phi(t)),
\end{equation}
where $a(t)$ and $\phi(t)$ are the instantaneous amplitude and phase, moreover the instantaneous frequency is given by $f_{i}(t) = \frac{1}{2\pi}\phi'(t)$.
The Wigner-Ville analysis of such signals is greatly simplified by using the analytic signal representation, which isolates the positive frequency component and provides a direct link to the signal's phase and amplitude.

The analytic signal is defined as $s_a(t) = s(t) + \mathrm{i}\mathcal{H}\{s(t)\}$, where $\mathcal{H}\{s(t)\}$ is the Hilbert transform of the real signal $s(t)$. The Hilbert transform in the time-domain is defined by the principal value integral:
\begin{equation}
\mathcal{H}\{s(t)\} = \frac{1}{\pi} \operatorname{p.v.} \int_{-\infty}^{\infty} \frac{s(\tau)}{t-\tau}\,d\tau.
\label{eq:hilbtime}
\end{equation}
For a general chirp, this integral (\ref{eq:hilbtime}) is often intractable to compute exactly. However, for signals where the phase $\phi(t)$ is rapidly varying, asymptotic methods provide a powerful and accurate approximation.
When the phase contains a large parameter $N$, i.e., $s_N(t) = a(t)\cos(N\phi(t))$, we can derive a systematic expansion for its Hilbert transform.

A theorem, stated and detailed in \ref{app:hilbert_theorem}, provides the asymptotic formula for the analytic signal of a chirp. This theorem establishes that, under minimal smoothness assumptions on the amplitude $a(t)$ 
and phase $\phi(t)$, the analytic signal can be accurately approximated. Using Theorem~\ref{th:asinto_appendix}, we can write the analytic signal corresponding to $s_N(t)$ as:
\begin{equation}
    s_{a,N}(t) \approx a(t)e^{\mathrm{i} N\phi(t)} - \frac{\mathrm{i}}{N} \frac{d}{dt}\left(\frac{a(t)}{\phi'(t)}\right) e^{\mathrm{i}N\phi(t)}.
    \label{eq:asinhilbert}
\end{equation}
This approximation is the foundation for the asymptotic analysis of the X-PWVD for chirp signals.
It is important to highlight the physical interpretation of this result. The leading term, $a(t)\exp(\mathrm{i}N\phi(t))$, represents the elementary approximation, which is obtained under the {\it adiabatic hypothesis}. 
This hypothesis assumes that the signal's parameters (amplitude and frequency) change so slowly that the signal behaves locally like a pure sinusoid. 
The next term in the expansion, of order $\mathcal{O}(1/N)$, provides the first-order correction to this idealization. It precisely quantifies the deviation from the adiabatic behavior, which is caused by the non-zero rate of change of the amplitude $a(t)$ and frequency $\phi'(t)$. 
Thus, the expansion offers a direct way to measure the accuracy of the elementary approximation.

It should be noted that the asymptotic correction in Eq.~(\ref{eq:asinhilbert}) is not optional but necessarily present. 
Indeed, as emphasized in the paper \cite{picinbono1997}, the equality between the analytic representation of the signal and the leading terms of (\ref{eq:asinhilbert}) 
cannot hold in general, except for special cases. In practice, the leading term often referred to as the 
analytic signal is only an asymptotic approximation, which is valid in a rigorous sense only when an exact 
frequency separation occurs. This limitation is formally expressed by the Bedrosian theorem \cite{Bedrosian1962, Chiollaz1978}, which provides 
the precise conditions under which the analytic signal construction is exact.

A notable property of Eq.~(\ref{eq:asinhilbert}) is that its asymptotic correction acts 
exclusively on the signal's amplitude. Consequently, the instantaneous frequency---a key 
physical attribute---remains entirely unaltered.

The asymptotic approximation (\ref{eq:asinhilbert}) relies on the assumption that the phase $\phi(t)$ is sufficiently smooth.
In the following subsection, we examine the fundamental conditions that a signal must satisfy for this assumption to be justified.

\subsection{Fundamental Conditions for the Amplitude/Phase Model}
\label{sec:phase_analyticity}

While the asymptotic model provides a practical way to construct an analytic signal with a well-behaved phase function, a more fundamental question concerns the possibility of representing 
a signal $s(t)$ as $a(t)\cos(\phi(t))$ with a real-analytic phase.
This subsection provides the theoretical foundation and justification for the assumption underlying the chirp models employed throughout this paper. 

Specifically, it examines the conditions under which a real signal $s(t)$ admits such a representation while ensuring that the phase $\phi(t)$ is real analytic. 
The analysis is closely related to the theory of analytic signals \cite{Bedrosian1962, Chiollaz1978}.

Let $s(t)$ be a real signal and $s_a(t) = s(t) + \mathrm{i} \mathcal{H}\{s(t)\}$ its corresponding analytic signal. 
The phase of $s_a(t)$ is formally $\theta(t) = \text{Im}[\log(s_a(t))]$. For this phase to be extendable to an analytic function $\theta(\zeta)$ 
in a neighborhood $U \subset \mathbb{C}$, two fundamental conditions on the analytic signal's extension, $Z(\zeta)$, must be met.

\begin{theorem}[Fundamental Conditions for Phase Analyticity]
For the phase function $\theta(\zeta)$ to be analytic in a connected open neighborhood $U$, two conditions are required:
\begin{enumerate}
    \item \textbf{Analyticity of the Signal:} The signal $s(t)$ must be real analytic, allowing its extension to an analytic function $Z(\zeta)$ in $U$.
    \item \textbf{Non-Vanishing Analytic Signal:} The analytic signal extension $Z(\zeta)$ must have no zeros within the neighborhood $U$, i.e., $Z(\zeta) \neq 0$ for all $\zeta \in U$.
\end{enumerate}
\end{theorem}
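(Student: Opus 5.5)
The statement is a necessity claim: if the phase $\theta(\zeta)$ of the analytic signal extends analytically to $U$, then (1) $s$ is real analytic and (2) the extension $Z$ of $s_a$ has no zeros in $U$. I will also show the converse sufficiency on simply connected $U$, since that is how the theorem is used to justify the chirp model. Throughout, I will regard $\theta(\zeta)$ as the analytic continuation of $-\mathrm{i}\log Z(\zeta)$, so that on the real axis $Z(t)=|s_a(t)|\,e^{\mathrm{i}\theta(t)}$. In keeping with the amplitude/phase model, I will take the amplitude $a(t)=|s_a(t)|$ to be real analytic as well.

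For condition 1, I will assume $\theta$ is analytic on $U$ and that $a$ extends analytically. Then $Z = a\,e^{\mathrm{i}\theta}$ is analytic on $U$. Its restriction to $U\cap\R$ is $s_a$, so $s=\mathrm{Re}\,s_a$ is real analytic. For the converse direction, I will invoke the standard fact that the Hilbert transform preserves real analyticity on a strip when $s$ extends holomorphically with suitable decay; the upper-half-plane representation of the Hilbert transform makes $s_a$ holomorphic there. This supplies the extension $Z$.

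For condition 2, the key step is to write $\log Z = \log a + \mathrm{i}\theta$. Suppose $Z(\zeta_0)=0$ for some $\zeta_0\in U$. Then $Z(\zeta)=(\zeta-\zeta_0)^m g(\zeta)$ with $g(\zeta_0)\neq0$ and $m\ge1$. Any branch of $\arg Z$ on a small punctured disc around $\zeta_0$ changes by $2\pi m$ along a loop about $\zeta_0$, so no single-valued continuous logarithm exists near $\zeta_0$. Equivalently, $Z'/Z$ has a pole with residue $m$ at $\zeta_0$. But $\mathrm{i}\,d\theta/d\zeta$ must equal $Z'/Z$ minus $(\log a)'$, and if both $\theta$ and $\log a$ are analytic this quantity would be analytic. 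This contradiction shows $Z\neq0$ on $U$. I will also note the real-axis manifestation: at a real zero of $s_a$ the phase jumps by $\pi$ (odd $m$), so $\theta$ cannot even be continuous there.

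For sufficiency, I will assume conditions 1 and 2 hold on a simply connected $U$. Then $Z'/Z$ is holomorphic, so it has a primitive, and $\log Z$ exists as a holomorphic function. Setting $\theta=\mathrm{Im}\log Z$ along the real line, extended as $-\mathrm{i}(\log Z-\log a)$, gives the analytic phase.

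The main obstacle is the ambiguity in the statement. Strictly, $\mathrm{Im}\log Z$ is only harmonic, not holomorphic, so "$\theta$ analytic" must be interpreted as real-analyticity on $U\cap\R$, or as holomorphy of the complexified $\log Z$. I will fix this interpretation explicitly at the start. I will also note the topological caveat: nonvanishing of $Z$ alone suffices only when $U$ is simply connected, or more generally when the winding of $Z$ around the holes of $U$ is zero.
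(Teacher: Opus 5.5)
Your necessity argument for condition 2 is circular. The contradiction rests on the step ``if both $\theta$ and $\log a$ are analytic, then $Z'/Z-(\log a)'$ is analytic''. But you only assumed that $a$ extends analytically, not $\log a$. Since $Z=a\,e^{\mathrm{i}\theta}$ on $U$ (identity theorem) and $e^{\mathrm{i}\theta}\neq0$, the continuation of $a$ vanishes exactly where $Z$ does, so ``$\log a$ analytic on $U$'' is the conclusion $Z\neq0$ in disguise. Without it, the pole of $Z'/Z$ is simply absorbed by $a'/a$ and nothing is contradicted.

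In fact, for the real phase $\theta=\arg s_a$ (which your normalization $Z(t)=|s_a(t)|e^{\mathrm{i}\theta(t)}$ and the splitting $\log Z=\log a+\mathrm{i}\theta$ use), condition 2 is not necessary at all. Take $Z(\zeta)=(\zeta^{2}+\epsilon^{2})\,e^{\mathrm{i}N\zeta}(\zeta+\mathrm{i})^{-4}$ with $N>0$ and $0\le\epsilon<1$. It lies in $H^2$ of the upper half-plane, so it is a finite-energy analytic signal. Its phase on the real axis is $Nt+4\arctan t$ (mod $2\pi$), which is analytic on the disc $|\zeta|<1$, yet $Z(\pm\mathrm{i}\epsilon)=0$ inside that disc. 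For $\epsilon=0$ this is the even-$m$ real zero that your ``phase jumps by $\pi$'' remark passes over. For $\epsilon>0$ one even has $|s_a|>0$ on all of $\R$.

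Your sufficiency step carries the same hidden assumption. Defining $\theta=-\mathrm{i}(\log Z-\log a)$ needs $\log a$ analytic on $U$. Since $a^{2}=Z\,Z^{\#}$ with $Z^{\#}(\zeta)=\overline{Z(\bar\zeta)}$, this requires $Z\neq0$ on the mirror image $\bar U$ as well. That is automatic only if $U$ is symmetric about the real axis.

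The repair is to make the interpretive choice you postpone to the end. Your opening sentence already names the right one, though with the wrong normalization. If $\Theta$ is the holomorphic continuation of $-\mathrm{i}\log Z$, then $Z=e^{\mathrm{i}\Theta}$ (not $|s_a|e^{\mathrm{i}\Theta}$), with $\mathrm{Re}\,\Theta=\arg s_a$ and $\mathrm{Im}\,\Theta=-\log|s_a|$ on the axis. In that reading both necessity claims are immediate and need no assumption on $a$: $e^{\mathrm{i}\Theta}$ never vanishes, and $s=\mathrm{Re}\,e^{\mathrm{i}\Theta}$ on $U\cap\R$ is real analytic. Sufficiency on simply connected $U$ is then your primitive-of-$Z'/Z$ argument. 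This is the reading under which the paper's one-line proof (the branch point of $\log$ at $0$) is valid, so the winding/residue machinery is unnecessary.

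If you instead insist on the real phase, the right object is $\theta=\frac{1}{2\mathrm{i}}\log(Z/Z^{\#})$. On simply connected $U$ it is analytic except at points where $Z$ and $Z^{\#}$ vanish to different orders. Then $Z\neq0$ on $U\cup\bar U$ is a sufficient condition, but condition 2 as stated cannot be proved necessary, because it is not.
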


\begin{proof}
The first condition is foundational. 
The second arises because the complex logarithm $\log(w)$ has a branch point singularity at $w=0$. 
If $Z(\zeta_0) = 0$ for some $\zeta_0 \in U$, then $\log(Z(\zeta_0))$ is undefined, inducing a singularity in the phase.
\end{proof}

The requirement that the real-valued signal $s(t)$ must be \textit{real-analytic} might initially appear to be an overly restrictive condition for signals encountered in practical signal processing.
Real-world signals are often time-limited or may exhibit non-smooth behavior, which would violate the definition of a real-analytic function.

However, this assumption becomes highly justifiable when viewed within the context of typical signal processing workflows. 
Signals are almost invariably subjected to pre-processing steps, such as anti-aliasing or band-selection filtering, before any advanced analysis is performed.
Such filtering operations inherently regularize the signal, making the analyticity assumption a very reasonable and powerful mathematical model. 
The justification can be more formally established by considering two prevalent signal classes:

\begin{itemize}
    \item \textbf{Band-Limited Signals:} An ideal low-pass filter transforms a signal into a band-limited one. According to the \textbf{Paley-Wiener theorem},
     a function is band-limited if and only if it is the restriction to the real line of an entire function of exponential type. An entire function is analytic over the whole complex plane. 
     Therefore, for any signal that can be modeled as perfectly band-limited, the assumption of analyticity is not an approximation but a direct mathematical consequence.

    \item \textbf{Essentially Band-Limited Signals:} In practice, perfect band-limitation is not achievable. However, the spectra of many physically relevant signals decay very rapidly (e.g., exponentially) 
    outside of a certain frequency band. Such signals are aptly modeled by function classes like the \textbf{Hardy spaces} ($H^p$). 
    A key property of functions in these spaces is that they are analytic in a strip surrounding the real axis in the complex plane. 
    This local analyticity is entirely sufficient for the uniform asymptotic methods and analytic continuation arguments employed in the paper's theoretical framework.
\end{itemize}

In conclusion, while the condition of real-analyticity on $s(t)$ seems strong in isolation, it is a pragmatic and well-founded premise for the signals that result from standard filtering operations. 
These operations smoothen the signal to a degree where its mathematical properties converge to those of analytic functions, 
thereby validating the theoretical foundation upon which the paper's analysis is built.

Consequently, this analyticity requirement perfectly accommodates practical signals acquired from biomedical sensors (e.g., EEG, ECG) or mechanical vibration monitors. 
In these applied fields, raw analog signals are subjected to anti-aliasing hardware filters prior to digitization. 
The translation of these physical acquisition protocols into the mathematical band-limited (and thus real-analytic) domain is standard practice in both bioelectrical 
\cite{Sornmo2005} and mechanical \cite{Randall2011} signal processing.

\section{Heuristic Derivation of a Wigner-Ville Approximant for Chirps}
\label{sec:heuri}

Having implemented the theoretical framework described above, we recognize the need to develop analytical evaluation techniques for the PWVD of chirps to gain 
comprehensive insights into the properties of the Wigner surface in the time-frequency domain.
To this end, we derive a heuristic approximation method that enables systematic analysis of the PWVD; this method will be presented in the following sections.

For the sake of simplicity, we resort to the WVD to derive the heuristic approximation. 
The WVD is particularly convenient because it does not rely on a windowing function. 
In fact, introducing a window only complicates the calculation, while ultimately contributing nothing more than a local
multiplicative factor in the expansion. From a conceptual standpoint, this choice
 does not limit the generality of the analysis, as the heuristic derivation already captures the essential mechanisms 
 underlying mathematically more elaborate approaches, even if it is less numerically accurate than the full expansion. In this sense, the WVD 
 serves as a natural starting point, providing both analytical tractability and the key insights necessary for a deeper 
 understanding of mathematically founded derivations.

Let us consider a general nonlinear chirp signal of the form $s(t) = a(t)e^{i\phi(t)}$. The WVD of this signal is defined as:
\begin{equation}
W_s(t, \omega) = \int_{-\infty}^{+\infty} s\left(t + \frac{\tau}{2}\right) s^*\left(t - \frac{\tau}{2}\right) e^{-i\omega\tau} \,d\tau
\end{equation}

\paragraph{Step 1: Phase and Amplitude Approximations.} The core of the heuristic lies in expanding the signal's phase difference $\phi(t + \tau/2) - \phi(t - \tau/2)$ in a Taylor series. Truncating at the cubic term gives $\phi'(t)\tau + \frac{\phi'''(t)}{24}\tau^3$. Assuming $a(t)$ is slowly-varying, the amplitude product is $a(t+\tau/2)a(t-\tau/2) \approx a(t)^2$.

\paragraph{Step 2: Integral and Connection to Airy.} Combining these simplifications, the WVD integral becomes:
\begin{equation}
W_s(t, \omega) \approx a(t)^2 \int_{-\infty}^{+\infty} e^{i\left[ (\phi'(t)-\omega)\tau + \frac{\phi'''(t)}{24}\tau^3 \right]} \,d\tau
\label{eq:heuristic_integral}
\end{equation}
This integral matches the structure of the integral representation of the Airy function, $\text{Ai}(x)$. Following the standard definition cited by Berry via Abramowitz \& Stegun \cite{Abramowitz1964}, a change of variables on Eq. \eqref{eq:heuristic_integral} yields the final, asymptotic formula:
\begin{equation}
W_{s, \text{heuristic}}(t, \omega) = \frac{4\pi a(t)^2}{|\phi'''(t)|^{1/3}} \text{Ai}\left( \frac{2(\phi'(t)-\omega)}{(\phi'''(t))^{1/3}} \right).
\label{eq:heuristic_corrected}
\end{equation}
In summary, the heuristic derivation shows that the local structure of the WVD is governed by the Airy function. 
However, Eq.~(\ref{eq:heuristic_corrected}) is only valid in the vicinity of the instantaneous frequency curve (is a {\it local approximation}), and its accuracy deteriorates away from it due to the slowly-varying amplitude assumption. 
We will now show that this Airy-type behavior is not merely an artifact of this local expansion. Instead, it is the manifestation of a deeper physical principle, 
which we will explore in the next section through the lens of M.V. Berry's semi-classical quantum mechanics geometric framework.

\section{Semi-Classical Interpretation}
\label{sec:semiclassical}

After presenting the heuristic derivation based on a local Taylor expansion of the phase, which leads to an Airy-function-based asymptotic formula, we now turn to a broader and  physically based semi-classical framework valid in the quantum mechanics. 
This section reveals how the local approximation fits naturally within Berry's geometric construction \cite{Berry1977}, providing a unified physical and mathematical interpretation of the PWVD near the instantaneous frequency curve.

The heuristic result of Eq. \eqref{eq:heuristic_corrected} is not an independent approximation but is the precise local limit of the more general framework developed by M.V. Berry in the context of semi-classical 
approximation of quantum regime.
The core of Berry's method is a geometric construction involving a {\it chord}. We replicate this in the time-frequency plane.

\subsection{The Geometric Approximation}
Berry's approach provides a geometric approximation for some integral arising in semi-classical quantum mechanics \cite{Berry1977} similar to the WVD, and valid even far from the instantaneous frequency (IF)  curve.
Translating Berry formula in the time-frequency domain we find:
\begin{equation}
W_s(t, \omega) \approx \mathcal{C}_b(t, \omega) \cdot \text{Ai}\left( -\left[\frac{3}{2}\mathcal{A}_b(t, \omega)\right]^{\frac{2}{3}} \right).
\label{eq:berry_uniform}
\end{equation}
Here, $\mathcal{A}_b(t, \omega)$ is the geometric area enclosed by the {\it Berry chord}, displayed in Fig. \ref{fig:berry}, and the IF curve $\omega_i(t) = \phi'(t)$, and $\mathcal{C}_b(t, \omega)$ is an amplitude 
factor that depends on the signal's power and the local geometry of the IF curve.

\subsection{The Berry Chord}
The {\it classical trajectory} for a signal is its IF curve, given by $\omega = \omega_i(t)$. 
To evaluate the WVD at a point $(t, \omega)$, we construct a chord with the following properties:
\begin{enumerate}
    \item The chord is a straight line segment that intersects the IF curve at two points, $(t_1, \omega_1)$ and $(t_2, \omega_2)$, where $\omega_1 = \omega_i(t_1)$ and $\omega_2 = \omega_i(t_2)$.
    \item The point $(t, \omega)$ is the exact midpoint of this chord.
\end{enumerate}
In Fig. \ref{fig:berry} is displayed a pictorial description of the concepts introduced in this section. 
Mathematically, the midpoint condition is:
\begin{equation}
t = \frac{t_1 + t_2}{2} \quad \text{and} \quad \omega = \frac{\omega_1 + \omega_2}{2} = \frac{\omega_i(t_1) + \omega_i(t_2)}{2}.
\label{eq:midpoint}
\end{equation}
For a given $t$ and $\omega$, a solution to equations (\ref{eq:midpoint}) must be found. 
The \ref{app:cordareale} presents the Theorem \ref{th:cordareale} that proves the local existence of real solutions for the concave or convex case.

\begin{figure}[h!]
    \centering
    \includegraphics[width=1\textwidth]{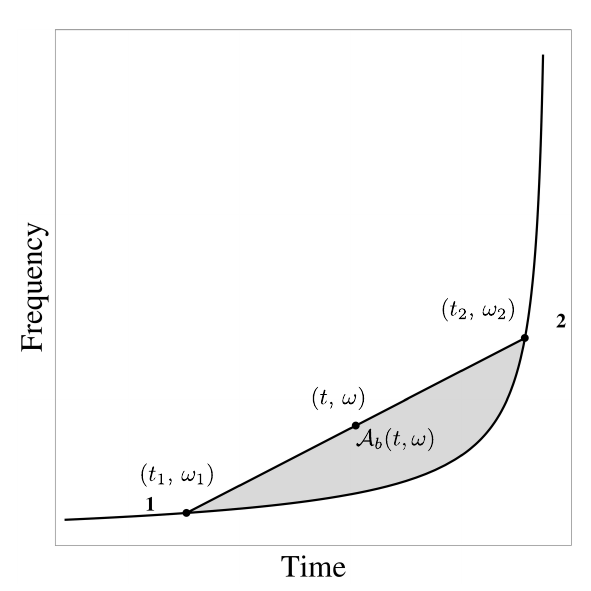}
    \caption{A schematic of the time-frequency trajectory for a frequency-modulated (chirp) signal. The initial state $(t_1, \omega_1)$ and final state $(t_2, \omega_2)$ are highlighted.
     The Berry chord (straight line segment) and the Berry area $\mathcal{A}_b$ (shaded region) are fundamental geometric 
     quantities for the asymptotic analysis of the signal and the characterization of its phase.}
    \label{fig:berry}
\end{figure}

\subsection{The Area $\mathcal{A}_b(t, \omega)$}
The area $\mathcal{A}_b(t, \omega)$ is defined as the geometric area enclosed between the arc of the IF curve from $t_1$ to $t_2$ and the Berry chord itself. Formally,
\begin{equation}
\mathcal{A}_b(t, \omega) = \int_{t_1}^{t_2} \omega_i(\tau) \, d\tau - \omega \cdot (t_2 - t_1).
\label{eq:area}
\end{equation}

It is important to stress that this area should be regarded as a \emph{positive quantity}. 
The oscillatory or exponentially decaying character of the Airy function in Eq.~\eqref{eq:berry_uniform} is not controlled by the sign of $\mathcal{A}_b$, but rather by the location of the evaluation point $(t,\omega)$ with respect to the IF curve. 
When $(t,\omega)$ lies in the classically allowed (convex/concave) region, the corresponding Airy function oscillates. 
By contrast, when $(t,\omega)$ is in the classically forbidden region (the {\it tunneling} regime in quantum mechanics), the Airy function exhibits exponential decay. 

To describe this latter regime within Berry's framework, one must analytically continue the definition of the chord into the complex plane. 
In this case the simple geometric interpretation of $\mathcal{A}_b(t, \omega)$ as a real area is lost, but the uniform approximation still holds \cite{chester1957} by analytic continuation
 (see \ref{app:bleistein}). 
This parallels the well-known semiclassical treatment of quantum tunneling, where forbidden regions are accessed by deforming trajectories into the complex domain.

It is worth noting that, in the language of signal processing, the geometric area $\mathcal{A}_b(t, \omega)$ has a direct interpretation as the phase of the signal accumulated along the chord. 
This correspondence further clarifies the role of $\mathcal{A}_b(t, \omega)$ as the key quantity governing the oscillatory or exponentially decaying behavior of the PWVD.

\subsection{The Heuristic as a  Short-Chord Limit}
The heuristic approach (see section \ref{sec:heuri}) is valid only for points $(t, \omega)$ very close to the IF curve. 
This condition implies that the corresponding Berry chord is short, meaning its duration $\tau_0$ is small. We now show, 
step-by-step, that the geometric formulation \eqref{eq:berry_uniform} reduces to the heuristic formulation \eqref{eq:heuristic_corrected} in this limit.

\paragraph{Step 1: Expansion of the Geometric Area.}
The area $\mathcal{A}_b(t, \omega)$ is exactly equal to the Wigner phase evaluated at the stationary point $\tau_0$ (i.e. the chord duration). 
Expanding for small $\tau_0$ gives:
\begin{equation}
\mathcal{A}_b(t, \omega) = \phi\left(t + \frac{\tau_0}{2}\right) - \phi\left(t - \frac{\tau_0}{2}\right) - \omega\tau_0 \approx (\phi'(t) - \omega)\tau_0 + \frac{\phi'''(t)}{24}\tau_0^3
\label{eq:area_expanded}
\end{equation}

\paragraph{Step 2: Expansion of the Stationary Phase Condition.}
The chord duration $\tau_0$ is determined by the stationary phase condition. Expanding the terms $\phi'(t \pm \tau_0/2)$ in a Taylor series up to the second order gives the key relationship between the frequency deviation and the chord duration:
\begin{equation}
\frac{\phi'\left(t+\frac{\tau_0}{2}\right) + \phi'\left(t-\frac{\tau_0}{2}\right)}{2} = \omega \implies \phi'(t) + \frac{\phi'''(t)}{8}\tau_0^2 \approx \omega
\end{equation}
Rearranging, we find:
\begin{equation}
\omega - \phi'(t) \approx \frac{\phi'''(t)}{8}\tau_0^2
\label{eq:key_relation}
\end{equation}

\paragraph{Step 3: Relating Area to Frequency Deviation.}
We now use Eq. \eqref{eq:key_relation} to eliminate the intermediate variable $\tau_0$ from the area expression. First, substitute $(\phi'(t)-\omega) = -\frac{\phi'''(t)}{8}\tau_0^2$ into Eq. \eqref{eq:area_expanded}:
\begin{align}
\mathcal{A}_b(t, \omega) &\approx -\frac{\phi'''(t)}{12}\tau_0^3
\label{eq:area_final_approx}
\end{align}
Next, we isolate $\tau_0^3$ from Eq. \eqref{eq:key_relation} by raising it to the power of $3/2$:
\[ \tau_0^3 = (\tau_0^2)^{3/2} \approx \left(\frac{8(\omega - \phi'(t))}{\phi'''(t)}\right)^{3/2} = \frac{16\sqrt{2} (\omega - \phi'(t))^{3/2}}{(\phi'''(t))^{3/2}} \]
Substituting this expression for $\tau_0^3$ into Eq. \eqref{eq:area_final_approx}, we get:
\begin{equation}
\mathcal{A}_b(t, \omega) \approx -\frac{4\sqrt{2}}{3} \frac{(\omega - \phi'(t))^{3/2}}{(\phi'''(t))^{1/2}}.
\end{equation}

\paragraph{Step 4: Final Comparison of the Arguments.}
We now substitute this approximate area into the argument of the geometric formulation \eqref{eq:berry_uniform}:
\begin{align}
-\left[\frac{3}{2}\mathcal{A}_b(t, \omega)\right]^{\frac{2}{3}} &\approx -\left[\frac{3}{2} \left(-\frac{4\sqrt{2}}{3} \frac{(\omega - \phi'(t))^{3/2}}{(\phi'''(t))^{1/2}}\right)\right]^{\frac{2}{3}} \nonumber \\
&= -\left[-2\sqrt{2} \cdot \frac{(\omega - \phi'(t))^{3/2}}{(\phi'''(t))^{1/2}}\right]^{\frac{2}{3}}
\end{align}
Assuming $\omega > \phi'(t)$ and $\phi'''(t) > 0$ for simplicity, the inner term is negative. The overall real-valued result is:

\beq
-\left[\frac{3}{2}\mathcal{A}_b(t, \omega)\right]^{\frac{2}{3}} \sim - \frac{2(\omega - \phi'(t))}{(\phi'''(t))^{1/3}}.
\eeq
This is {\it exactly identical} to the argument of the Airy function in our heuristic derivation Eq. (\ref{eq:heuristic_corrected}).
In the derivation above we assumed $\phi'''(t) > 0$ for simplicity. If instead $\phi'''(t) < 0$, the analysis follows identically: the cubic root $(\phi'''(t))^{1/3}$ remains well defined (once a consistent branch is chosen), leaving the final Airy argument unaffected.
The only effect of changing the sign of $\phi'''(t)$ is to reverse which side of the IF curve corresponds to the oscillatory regime and which to the exponentially decaying regime.

\paragraph{Step 5: the Amplitude Pre-factors.}
We now reformulate Berry's original formula, equation 4.7 from his paper \cite{Berry1977}:
\begin{equation}
\Psi(q,p) = \frac{2}{\sqrt{\pi\hbar |D(q,p)|}} \text{Ai}\left( -\left[\frac{3\mathcal{A}_b(q,p)}{2\hbar}\right]^{\frac{2}{3}} \right)
\label{eq:berry_original}
\end{equation}
where $D(q,p)$ is a denominator related to the action derivatives.
The argument of the Airy function translates directly as shown before (let us set $\hbar = 1$). 
Translating the denominator $|D(q,p)|$ is non-trivial. However, Berry's own transitional approximation (eq. 4.8 Ref. \cite{Berry1977}) shows that near the trajectory, 
the amplitude is dominated by its local curvature. For the IF curve $\omega_i(t)$, the curvature is related to its second derivative, $\omega_i''(t)$, which is equivalent to the third derivative of the signal's phase, $\phi'''(t)$.

The amplitude factor $\mathcal{C}_b(t, \omega)$ can therefore be approximated as:
\begin{equation}
\mathcal{C}_b(t, \omega) \propto \frac{1}{\left|\phi'''(t)\right|^{1/3}}.
\label{eq:amplitude}
\end{equation}
We note that in the formula (\ref{eq:amplitude}) the chirp amplitude is not present.
Indeed,  Berry's original semiclassical formula does not explicitly contain a term equivalent to the amplitude of a chirp signal.

In conclusion, the connection between the PWVD and Berry's semiclassical chord construction is exact at the phase level, but only partial at the amplitude level, 
and the geometric framework must be extended via analytic continuation to be globally valid. 
Indeed, as discussed above, accessing the evanescent (tunneling) regime below the instantaneous frequency curve requires evaluating the chord in the complex plane. 

Furthermore, Berry's foundational integral, designed for infinite-domain quantum probability distributions, lacks a native mechanism to handle the amplitude modulation and finite windowing inherent to practical AM-FM signals. This highlights the necessity of our approach: while the geometric analogy beautifully captures the phase behavior, the correct amplitude pre-factors and boundary truncation effects must be explicitly derived from the signal model. 

Having established the connection between the local heuristic and Berry's broader geometric framework, we are now ready to proceed to the main result of this paper. The following section provides the mathematical derivation of the uniform asymptotic formula, which formalizes these physical and geometric insights into a powerful analytical tool.

\section{Uniform Asymptotic Evaluation Techniques}
\label{sec:uniform}

This section presents a complete derivation of a uniform asymptotic expansion \cite{chester1957, borovikov1994} (see \ref{app:bleistein}) 
for the PWVD of a chirp signal.
The analysis is based on a crucial assumption, common in signal analysis practice: the use of a windowing function that vanishes smoothly at the edges of its support.

To formalize the problem, let us consider an analytic signal $x(t) = a(t)e^{\mathrm{i} \chi\psi(t)}$ and its PWVD:
\beq
PW_x(t,f)=\int_{-T}^{T} w\left(\frac{\tau}{2}\right) w\left(-\frac{\tau}{2}\right) x\left(t+\frac{\tau}{2}\right)x^{\ast}\left(t-\frac{\tau}{2}\right)e^{-\mathrm{i} 2\pi f \tau} d\tau.
\eeq
The integral can be written in the canonical form :
\beq
PW_x(t,\bar{f})=\int_{-T}^{T} A(\tau,t) e^{\mathrm{i}\chi F(\tau,t,\bar{f}) }d\tau.
\label{eq:canonical}
\eeq
where $\chi$ is a large parameter, and we define:
\beq
A(\tau, t) = w(\frac{\tau}{2}) w(-\frac{\tau}{2}) a(t + \frac{\tau}{2}) a(t - \frac{\tau}{2})
\eeq
\beq
F(\tau, t, \bar{f}) = \psi(t + \frac{\tau}{2}) - \psi(t - \frac{\tau}{2}) - 2\pi \bar{f} \tau.
\label{eq:faseF}
\eeq
where $\bar{f}= f/{\chi}$ is the asymptotic frequency.

Our {\it key assumption} is that the window $w(\cdot)$ is a \emph{tapered} function, such that it vanishes at the edges of the integration interval:
\beq
w\left(\frac{T}{2}\right) = w\left(-\frac{T}{2}\right) = 0.
\eeq

We emphasize that this assumption does not exclude commonly used window functions; rather, it is perfectly aligned with standard signal processing practices. 
Widespread (continuous-time) tapered windows, such as Hann, Tukey (when the decay parameter is strictly greater than zero), or Blackman windows~\cite{Harris1978, PrabhuBook2014}, are specifically designed to smoothly decay to zero at their boundaries to minimize spectral leakage, thereby naturally satisfying this boundary condition. As we will demonstrate, this physical tapering is what simplifies the mathematical structure of both the asymptotic expansion and the remainder.

The functions $\Phi_{inc}$ and $\Phi'_{inc,p}$ are the foundations of our expansion. They are the generalizations of the standard Airy function $\text{Ai}(p)$ and its derivative $\text{Ai}'(p)$ for a finite and symmetric integration interval.

We define the Symmetric Incomplete Airy Function as the Fourier integral representation of the Airy function (see details in the paper \cite{cwik1988efficient, levey1969incomplete}), but evaluated over a finite interval $[-s, s]$:
\beq
\Phi_{inc}(p,s) := \frac{1}{2\pi} \int_{-s}^{s} e^{i\left(\frac{t^3}{3} + pt\right)} dt.
\eeq
This definition has two fundamental properties:
\begin{itemize}
    \item It is a real function, since the integrand has an even real part and an odd imaginary part which cancels out over a symmetric integration range.
    \item In the limit where the integration interval becomes infinite ($s \to \infty$), it converges to the standard Airy function: $\lim_{s \to \infty} \Phi_{inc}(p,s) = \text{Ai}(p)$.
\end{itemize}

\subsection{Uniform Expansion Theorem for PWVD}
In this section we develop a uniform  expansion of PWVD integral using  uniform approximation for oscillatory integrals with two coalescing stationary phase points (see Ref.s \cite{chester1957, borovikov1994} and \ref{app:bleistein}). 
To the best of our knowledge, uniform asymptotic expansion techniques, which are widely employed in electromagnetics for the evaluation of diffraction integrals \cite{felsen1973radiation},
 have not previously been applied within the context of signal processing. This highlights the novelty of the present approach.

To establish the validity of the uniform asymptotic expansion for the PWVD, we must verify that its integral representation, 
as formulated in Eq.~\eqref{eq:canonical}, satisfies the key hypotheses of the theorem by Chester, Friedman, and Ursell, detailed in \ref{app:bleistein}. 
The theorem is tailored for oscillatory integrals where the phase function possesses two stationary points that merge into a single, degenerate stationary point.

\paragraph{Mapping of Variables.} First, we map the variables from the general theorem to our specific PWVD integral.
\begin{itemize}
    \item The integration variable, denoted by $x$ in the theorem, is $\tau$.
    \item The large asymptotic parameter denoted by $\lambda$ is replaced by $\chi$.
    \item The phase function denoted by $\varphi(x, a)$, corresponds to $F(\tau, t, \bar{f})$ from Eq.~\eqref{eq:faseF}.
    \item The critical \textit{coalescence parameter}, denoted by $a$ in the theorem, is the analysis frequency $f$. 
    Its variation controls the merging of the stationary points $\pm\tau_s$ at $\tau=0$ as $f$ approaches the instantaneous frequency $f_i(t)$.
    \item The time $t$ acts as a \textit{spectator parameter}, which is held constant during the asymptotic analysis at a specific point $(t,f)$ but defines the local properties of the signal.
\end{itemize}

The theorem's applicability rests on two non-degeneracy conditions, which we verify at the coalescence point ($\tau=0$, $f = f_i(t)$).

\paragraph{First Condition: Non-vanishing Third Derivative.}
The theorem requires that the third derivative of the phase with respect to the integration variable be non-zero at the coalescence point.
\[
    \frac{\partial^3 F}{\partial \tau^3} \bigg|_{\tau=0} \neq 0.
\]
We compute this derivative from the phase function $F(\tau, t, \bar{f}) = \psi(t + \tau/2) - \psi(t - \tau/2) - \frac{2\pi f \tau}{\chi}$. The third derivative with respect to $\tau$ is:
\[
    \frac{\partial^3 F}{\partial \tau^3} = \frac{1}{8}\psi'''\left(t+\frac{\tau}{2}\right) - \left(-\frac{1}{8}\right)\psi'''\left(t-\frac{\tau}{2}\right) = \frac{1}{8}\left[ \psi'''\left(t+\frac{\tau}{2}\right) + \psi'''\left(t-\frac{\tau}{2}\right) \right].
\]
Evaluating at $\tau=0$, we obtain:
\begin{equation}
    \frac{\partial^3 F}{\partial \tau^3} \bigg|_{\tau=0} = \frac{1}{4}\psi'''(t).
\end{equation}
Thus, the first condition is satisfied \textit{provided} that $\psi'''(t) \neq 0$ i.e.  $f_i''(t) \neq 0$. 
This is a mild and physically meaningful constraint, requiring that the instantaneous frequency curve $f_i(t)$ does not have an inflection point at the time of analysis $t$.

\paragraph{Second Condition: Non-vanishing Mixed Derivative.}
The theorem's second crucial hypothesis is that the mixed partial derivative with respect to the integration variable and the coalescence parameter be non-zero:
\[
    \frac{\partial^2 F}{\partial \tau \partial f} \bigg|_{\tau=0} \neq 0.
\]
We first compute the derivative of the phase with respect to $\tau$:
\[
    \frac{\partial F}{\partial \tau} = \frac{1}{2}\psi'\left(t+\frac{\tau}{2}\right) - \left(-\frac{1}{2}\right)\psi'\left(t-\frac{\tau}{2}\right) - \frac{2\pi f}{\chi} = \frac{1}{2}\left[\psi'\left(t+\frac{\tau}{2}\right) + \psi'\left(t-\frac{\tau}{2}\right)\right] - \frac{2\pi f}{\chi}.
\]
Next, we differentiate this expression with respect to the parameter $f$:
\begin{equation}
    \frac{\partial^2 F}{\partial \tau \partial f} = \frac{\partial}{\partial f} \left( \frac{1}{2}\left[\psi'\left(t+\frac{\tau}{2}\right) + \psi'\left(t-\frac{\tau}{2}\right)\right] - \frac{2\pi f}{\chi} \right) = -\frac{2\pi}{\chi}.
\end{equation}
Since the asymptotic parameter $\chi$ is non-zero, this mixed partial derivative is a non-zero constant. Therefore, this second condition is \textit{unconditionally satisfied} due to the fundamental structure of the Wigner-Ville integral itself.

Both foundational hypotheses of the theorem are met. The first imposes a mild physical constraint on the signal's phase ($f_i''(t) \neq 0$ i.e. nonlinear chirps), while the second is structurally guaranteed. 
This provides a complete justification for the application of Bleistein's uniform asymptotic method to the PWVD of nonlinear chirp signals.


Let us focus on the asymptotic expansion of the PWVD integral. Its most general form for an integral over a finite interval 
$[a, b]$ includes contributions from the internal stationary points (Airy functions) and explicit additive boundary contributions from the endpoints $a$ and $b$
 (often denoted by $\mathcal{C}$, see \ref{app:bleistein}).

However, when the integrand's amplitude vanishes at the endpoints (thanks to the assumption on the window $w(t)$), 
the explicit boundary term $\mathcal{C}$ possesses a peculiar property, illustrated below. 
While it does not vanish completely, its contribution is strictly bounded by $\order{\chi^{-1}}$ 
and it vanishes as the  half‑width of the spectral window goes to infinity. (as proven in  \ref{app:remainder_bound}). 

Consequently, rather than keeping a cumbersome explicit boundary term, we absorb it into the global remainder $R(\chi)$. 
The asymptotic expansion thus simplifies to an elegant form, formally analogous to that over an infinite interval:
\begin{multline}
PW_x(t, \bar{f}) = \left(\frac{2\pi}{\chi^{1/3}}\right) e^{i\chi \phi_0} \left\{ A_{series} \Phi_{inc}(-\chi^{2/3}\xi, \chi^{1/3}q) \right. \\
- \left. \frac{iB_{series}}{\chi^{1/3}} \text{sgn}(\frac{\partial^3 F}{\partial \tau^3 } \bigg|_{\tau=0} ) \Phi'_{inc,p}(-\chi^{2/3}\xi, \chi^{1/3}q) \right\} + R(\chi)
\label{eq:bleistein_simplified}
\end{multline}
where $A_{series}$ and $B_{series}$ are complete asymptotic series, the base functions are the incomplete Airy functions ($\Phi_{inc}$) and their derivatives,
 and the global remainder $R(\chi)$.

In Eq. (\ref{eq:bleistein_simplified}) $q$ is the transformed integration limit, defined by $F(T,t,\bar{f}) = q^3/3 - \xi q$.


The sign into equation (\ref{eq:bleistein_simplified}) can be evaluated taking into account eq. (\ref{eq:faseF}) and computing the derivative in particular for $\tau=0$, we get:
\[
\frac{\partial^3 F}{\partial \tau^3} \bigg|_{\tau=0}  = \tfrac{1}{4}\,\psi'''(t),
\]
that can be written  in terms of IF
\[
\frac{\partial^3 F}{\partial \tau^3} \bigg|_{\tau=0}  = \frac{\pi}{2\chi}\, f_i''(t).
\]
We now proceed to calculate the dominant terms of the expansion \eqref{eq:bleistein_simplified}.

\subsection{Stationary Phase Points and Phase Parameters}

The critical points $\pm\tau_s$ are the solutions of the \emph{Berry's chord}
\beq
\frac{f_i\left(t + \frac{\tau_s}{2}\right) + f_i\left(t - \frac{\tau_s}{2}\right)}{2} = f
\eeq
where $f_i(t) = \frac{\chi}{2\pi} \psi'(t)$ is the instantaneous frequency of the signal.
The phase parameters $\phi_0$ and $\xi$ are determined by the geometry of the phase $F(\tau)$:
\begin{itemize}
    \item The midpoint phase $\phi_0$ is zero due to symmetry.
    \item The parameter $\xi$ is related to the phase difference: $\xi = \left[ \frac{3}{2} F(\tau_s, t, \bar{f}) \right]^{2/3}$.
\end{itemize}

It is crucial to highlight an implicit but fundamental assumption underpinning this entire framework: the instantaneous frequency function $f_i(t)$ must be a \textit{real analytic function}. 
This condition is not merely a convenience but a core requirement with a dual justification, rooted in both the geometric interpretation and the underlying mathematical machinery.

From a geometric perspective, analyticity allows the concept of the Berry's chord to be extended from the real domain to the complex plane via \textit{analytic continuation}. 
This unifies the oscillatory 
regime (where the chord's endpoints $\pm\tau_s$ are real) and the evanescent or {\it tunneling} regime (where the endpoints become a complex-conjugate pair), ensuring that Eq.~\eqref{eq:midpoint} 
remains meaningful across a broad region of the time–frequency plane.

From a methodological standpoint, the justification is even more stringent. The uniform asymptotic expansion itself, derived using the method of \textit{canonical cubic transformation} (see \cite{chester1957, levinson1960canonical} and \ref{app:bleistein}), relies on a critical change of variables. 
This transformation maps the original phase function $F(\tau, t, \bar{f})$ to a canonical cubic polynomial. The existence of such a mapping as an 
\textit{analytic diffeomorphism}---which is essential for the resulting asymptotic series to be well-defined---is guaranteed only if the original phase function $F(\tau, t, \bar{f})$
 is itself analytic in the integration variable $\tau$. Since the $\tau$-dependence in $F$ arises from the terms $\psi(t \pm \tau/2)$, this requirement directly translates to the condition that the signal's phase $\psi(t)$, 
 and thus its derivative $f_i(t)$, must be real analytic.

Therefore, the analyticity of $f_i(t)$ is a foundational prerequisite for the validity of the uniform asymptotic approximation. As established in subection~\ref{sec:phase_analyticity}, 
this condition is satisfied by broad and physically relevant classes of signals.

\subsection{The Leading Coefficients $A_0$ and $B_0$}
Thanks to the symmetry of the PWVD problem (odd phase and even amplitude in $\tau$), the leading coefficients of the series $A_{series}$ and $B_{series}$ are:
\begin{itemize}
    \item $B_0 = 0$: Symmetry cancels the leading term of the $B_{series}$.
    \item $A_0 \neq 0$: This is the dominant term of the entire expansion. Its expression is:
    \beq
    A_0 = \xi^{1/4} A(\tau_s, t) \left| \frac{2}{\frac{\partial^2 F}{\partial \tau^2}(\tau_s, t, \bar{f})} \right|^{1/2}.
    \eeq
\end{itemize}
To make this formula operational, we write out the terms $A(\tau_s, t)$ and $F''(\tau_s, t, \bar{f})$ as functions of the signal's properties:
\begin{align}
A(\tau_s, t) &= w^2\left(\frac{\tau_s}{2}\right) a\left(t+\frac{\tau_s}{2}\right) a\left(t-\frac{\tau_s}{2}\right) \\
\frac{\partial^2 F}{\partial \tau^2}(\tau_s, t, \bar{f}) &= \frac{\pi}{2\chi} \left[ f_i'\left(t+\frac{\tau_s}{2}\right) - f_i'\left(t-\frac{\tau_s}{2}\right) \right]
\end{align}

\subsection{Final Uniform Expression}
By substituting the calculated coefficients into the simplified expansion \eqref{eq:bleistein_simplified} and keeping only the dominant term (with $A_0$), we obtain the final approximation for the PWVD:

\begin{multline}
PW_x(t, f) \approx \left(\frac{2\pi}{\chi^{1/3}}\right) \Phi_{inc}(-\chi^{2/3}\xi, \chi^{1/3}q) \times \\
\left\{ \xi^{\frac{1}{4}} \cdot w^2\left(\frac{\tau_s}{2}\right) a\left(t+\frac{\tau_s}{2}\right) 
a\left(t-\frac{\tau_s}{2}\right) \cdot \left| \frac{4\chi}{\pi \left( f_i'(t+\frac{\tau_s}{2}) - f_i'(t-\frac{\tau_s}{2}) \right)} \right|^{\frac{1}{2}} \right\}
\label{eq:BUAPWV}
\end{multline}
where $q$ is the transformed integration limit, defined by $F(T,t,\bar{f}) = q^3/3 - \xi q$.


\subsection{Comment on the Asymptotic Accuracy of the Result}

It is important to emphasize the nature of this approximation and its error bounds. The formula (\ref{eq:BUAPWV}) represents the \textit{leading term} of the complete formal asymptotic expansion (\ref{eq:bleistein_simplified}). 

Thanks to the remainder estimate derived in \ref{app:remainder_bound}, we can explicitly differentiate the total asymptotic error into two distinct mathematical categories, governed by different physical parameters:

\begin{enumerate}
    \item \textbf{Intrinsic Asymptotic Series Error:} In the classical theory of uniform asymptotic expansions \cite{chester1957}, the error is governed by the first neglected term of the series ($A_1$). This term introduces a correction of order $\mathcal{O}(\chi^{-4/3})$. This error is intrinsic to the local cubic approximation of the phase and the Taylor expansion of the amplitude near the caustics. Crucially, this error depends strictly on the asymptotic parameter $\chi$ and \textit{persists even if the window size becomes infinite} ($T \to \infty$).
    
    \item \textbf{Boundary Truncation Error:} The PWVD is inherently a finite-domain integral. As demonstrated in \ref{app:remainder_bound}, the truncation at boundaries $\pm T$ introduces a global remainder $R(\chi)$ bounded by $\mathcal{O}(\chi^{-1})$. However, a deeper look at the explicit boundary evaluation of the 
residual function (see discussion in \ref{app:remainder_bound})  reveals a profound physical property. 
 The parameter $q$ is the transformed integration limit, which grows monotonically with the window size $T$. 
 Therefore, the magnitude of this boundary error is inversely proportional to the window support. 
\end{enumerate}

This dichotomy beautifully explains the transition from the PWVD to the standard WVD. For a fixed, finite window $w(\tau)$, the global error is technically dominated by the boundary term $\mathcal{O}(\chi^{-1})$. However, as the window size increases ($T \to \infty$), the boundary truncation error asymptotically vanishes to zero. 
In this limit, the $\mathcal{O}(\chi^{-1})$ error disappears, and the accuracy of the expansion is entirely governed by the intrinsic $\mathcal{O}(\chi^{-4/3})$ Airy series error, 
perfectly recovering the classical infinite-domain Wigner-Ville theory.

For practical applications where both the asymptotic parameter $\chi$ and the window size $T$ are sufficiently large, the leading-order approximation derived here offers excellent accuracy, effectively capturing the fundamental structure of the PWVD without being spoiled by boundary artifacts.


\subsection{Partial Conclusions}

We have derived a uniform asymptotic expression for the PWVD under the practical assumption of using a windowing function that vanishes at the edges of its support. 
This condition ensures that the explicit boundary contributions are bounded by $\order{\chi^{-1}}$ and can be absorbed into the remainder. 
This leads to a clean final formula where the PWVD is described solely by incomplete Airy functions modulated by coefficients that depend directly on the physical properties of the signal. 
This establishes a clear hierarchy among the approximations: our general uniform formula Eq.~(\ref{eq:BUAPWV}) serves as the most accurate description for practical, finite-windowed signals. 
It converges to the heuristic formula Eq.~(\ref{eq:heuristic_corrected}) under the combined limits of a wide window---where $\Phi_{\text{inc}}$ converges to $\text{Ai}$---and a local analysis in the neighborhood of the IF curve, 
where the coefficients simplify to their Taylor-approximated form. 
Finally, this result not only provides a powerful analytical tool but also clarifies the fundamental role of tapered windows in making the 
practical analysis over finite intervals converge to the elegant form of the mathematical theory over infinite intervals.

\section{Systematic Bias in Frequency Estimation}
 \label{sec:bias}

A critical issue in peak detection using the PWVD is the bias affecting the estimation of the IF.
This bias is not incidental but inherent to the estimation process, and its nature can be rigorously investigated through the
 uniform asymptotic expansion derived in Section~\ref{sec:uniform},  via our final expression Eq.~(\ref{eq:BUAPWV}).

To quantify this bias, we must find the  frequency, $f_{\text{max}}(t)$,
that maximizes the uniform asymptotic expression for the PWVD. The bias is then defined as $\text{Bias}(t) = f_{\text{max}}(t) - f_i(t)$.

The expression to maximize is:
\begin{equation}
    PW_x(t, f) \approx C(t, f) \cdot \Phi_{\text{inc}}(p(f), s(f))
\end{equation}
where $p(f) = -\chi^{2/3}\xi(f)$ and $s(f) = \chi^{1/3}q(f)$. To find the maximum, we set the derivative with respect to $f$ to zero. Applying the product and chain rules yields:
\begin{equation}
    \frac{d}{df} PW_x = \frac{dC}{df}\Phi_{\text{inc}} + C \left( \frac{\partial\Phi_{\text{inc}}}{\partial p}\frac{dp}{df} + \frac{\partial\Phi_{\text{inc}}}{\partial s}\frac{ds}{df} \right) = 0
\end{equation}
We analyze the terms in the large $\chi$ limit. The amplitude $C(t,f)$ and its derivative are slowly varying and do not scale with positive powers of $\chi$. However, the derivatives of the arguments $p$ and $s$ do:
\begin{itemize}
    \item $\frac{dp}{df} = -\chi^{2/3} \frac{d\xi}{df}$, which scales as $\mathcal{O}(\chi^{2/3})$.
    \item $\frac{ds}{df} = \chi^{1/3} \frac{dq}{df}$, which scales as $\mathcal{O}(\chi^{1/3})$.
\end{itemize}
For large $\chi$, the term containing $\chi^{2/3}$ is asymptotically dominant. Therefore, the condition for the maximum simplifies to the leading order equation:
\begin{equation}
    C \cdot \frac{\partial\Phi_{\text{inc}}}{\partial p}\left(-\chi^{2/3} \frac{d\xi}{df}\right) = 0 \quad \implies \quad \frac{\partial\Phi_{\text{inc}}}{\partial p} \Big|_{(p_{\text{max}}, s)} = 0
\end{equation}
This shows that the peak of the PWVD corresponds to the peak of the incomplete Airy function with respect to its first argument. Let $p_{\text{max}}(s)$ be the value of $p$ that solves this equation. 
The bias is found by relating this value back to the frequency deviation $\xi$. The implicit equation for the peak frequency $f_{\text{max}}$ is therefore:
\begin{equation}
    -\chi^{2/3}\xi(t, f_{\text{max}}) = p_{\text{max}}(s(f_{\text{max}})).
    \label{eq:general_bias}
\end{equation}
This equation is {\it implicit} because the quantity of interest, $f_{\text{max}}$, is embedded within the complex functions $\xi$ and $s$. Due to the highly nonlinear relationship between the frequency $f$ and the coalescence parameter $\xi$, it is not possible to algebraically invert this equation to obtain a simple, explicit formula for $f_{\text{max}}$.

However, we can derive an approximate \textit{explicit asymptotic formula} for the bias. 
The entire framework is based on a large asymptotic parameter $\chi$, which implies that the resulting bias will be small. 
This justifies analyzing the relationship between $\xi$ and $f$ in the local vicinity of the true IF, $f_i(t)$. 
The standard method for this is a first-order Taylor expansion of $\xi(t,f)$ around $f_i(t)$:
\begin{equation}
    \xi(t, f) \approx \xi(t, f_i) + \left. \frac{\partial\xi}{\partial f} \right|_{f=f_i} \cdot (f - f_i(t))
\end{equation}
By definition, $\xi(t, f_i) = 0$. Evaluating this at $f=f_{\text{max}}$, we obtain a local linear relationship between $\xi$ and the bias:
\begin{equation}
    \xi(t, f_{\text{max}}) \approx \left. \frac{\partial\xi}{\partial f} \right|_{f=f_i} \cdot \text{Bias}(t).
\end{equation}
Substituting this linearized relation back into the exact implicit equation (\ref{eq:general_bias}) allows us to solve for the bias:
\begin{equation}
    -\chi^{2/3} \left( \left. \frac{\partial\xi}{\partial f} \right|_{f=f_i} \cdot \text{Bias}(t) \right) \approx p_{\text{max}}(s(f_{\text{max}}))
\end{equation}
This yields the explicit asymptotic formula for the bias:
\begin{equation}
    \text{Bias}(t) \approx - \frac{p_{\text{max}}(\chi^{1/3} q(f_{\text{max}}))}{\chi^{2/3} \left( \left. \frac{\partial\xi}{\partial f} \right|_{f=f_i} \right)}.
    \label{eq:bias_explicit}
\end{equation}
This formula, while being an approximation, is fully consistent with the asymptotic nature of the entire analysis. It reveals that the bias is directly proportional to the location of the Airy peak ($p_{\text{max}}$) and inversely proportional to $\chi^{2/3}$ and the local curvature of the IF curve (which is captured by the derivative $\partial\xi/\partial f$).

\subsection{Calculation of $\partial\xi/\partial f$}
In the short chord limit (where $f \approx f_i$), the coalescence parameter $\xi$ is linearly related to the frequency deviation $f - f_i(t)$. By expanding the defining equations for the stationary point $\tau_s$ and the phase $F(\tau_s,t, \bar{f})$, we find:
\begin{equation}
    \xi(t,f) \approx \left[ \frac{32\pi^2}{\chi^2 f_i''(t)} \right]^{1/3} \cdot (f - f_i(t)).
\end{equation}
The derivative evaluated at the IF is therefore a constant determined by the signal's local properties:
\begin{equation}
    \left. \frac{\partial\xi}{\partial f} \right|_{f=f_i} = \left[ \frac{32\pi^2}{\chi^2 f_i''(t)} \right]^{1/3}.
    \label{eq:xi_derivative}
\end{equation}
Note that $f_i''(t)$ is the second derivative of the instantaneous frequency, i.e., the curvature of the IF curve. In terms of the signal's phase $\psi(t)$, this is related by $f_i''(t) = (\chi/2\pi)\psi'''(t)$.

\subsection{Approximate Explicit Bias Equation}
Substituting the derivative (\ref{eq:xi_derivative}) into the linearized version of the implicit equation (\ref{eq:bias_explicit}) yields the explicit asymptotic formula for the bias:
\begin{equation}
    \text{Bias}(t) \approx - \frac{p_{\text{max}}(\chi^{1/3} q(f_{\text{max}}))}{\chi^{2/3} \left[ \frac{32\pi^2}{\chi^2 f_i''(t)} \right]^{1/3}} = - p_{\text{max}}(\chi^{1/3} q(f_{\text{max}})) \left[ \frac{f_i''(t)}{32\pi^2} \right]^{1/3}.
    \label{eq:general_bias_explicit}
\end{equation}
This formula, consistent with the asymptotic nature of the analysis, reveals that the bias is directly proportional to the location of the incomplete Airy function peak ($p_{\text{max}}$) 
and to the cubic root of the IF curve's curvature ($f_i''(t)$). In the short-chord limit, $p_{\text{max}}$ approaches the constant  which is related  to the peak of the Airy function.

\section{Application to Gravitational Wave Chirps}
\label{sec:GWchirp}

To validate our theoretical framework and demonstrate its practical utility, we apply our uniform asymptotic formula to a gravitational wave (GW) signal from a coalescing binary system. 

Rather than an exhaustive numerical validation across all possible signal classes, this specific highly nonlinear chirp,
of significant scientific interest following its breakthrough discovery \cite{LIGO2016}, serves as an illustrative proof-of-concept. 
It provides an ideal physical laboratory to test our asymptotic approximation in a controlled, high-fidelity regime, where standard time-frequency methods often reach their limits.

In this section, we will first develop an asymptotic model for the whitened GW signal and then use numerical simulations to compare the predictions of our formula against direct computations, 
providing a comprehensive validation of our approach.

The model of a GW signal of a coalescing binary system, based on the lowest-order post-Newtonian (PN) expansion \cite{Sathya},  is a nonlinear chirp described by $s(t) = a(t) \cos(\phi(t))$. Its instantaneous frequency is:
\begin{equation}
f(t) = f_0 \left(1 - \frac{t}{T_c}\right)^{-3/8},
\label{eq:gwfreq}
\end{equation}
and its amplitude is:
\begin{equation}
a(t) = A \Pi(t) f_0^{2/3} \left(1 - \frac{t}{T_c}\right)^{-1/4},
\end{equation}
where $A$ is the amplitude, and $\Pi(t)$ is a rectangular window taking unit value in the range $[0, T_c']$.
We note that $T_c' < T_c$ because the model became inaccurate in the last stage of coalescence, the instantaneous frequency divergence is an artifact of PN expansion.

The crucial element for our analysis is the phase, which contains a large parameter. We define the dimensionless {\it chirp number}, $\chi$, as:
\begin{equation}
\chi = f_0 T_c.
\end{equation}
For typical GW sources, $\chi \gg 1$, ranging from $30$ to $10^3$. This is the large asymptotic parameter that justifies our approach. The signal's phase can be written explicitly in terms of $\chi$:
\begin{equation} 
\phi(t) = \frac{16\pi}{5} \chi \left[1 - \left(1 - \frac{t}{T_c}\right)^{5/8}\right].
\label{eq:GWphase}
\end{equation}

To accurately model the signal's instantaneous properties and effectively apply the Wigner-Ville transform, we construct the analytic representation, $s_a(t)$ of the GW signal $s(t)$. 
Using the asymptotic expansion theorem \ref{th:asinto_appendix} for analytic signals, which is effectively an expansion in powers of $1/\chi$, we express the leading-order analytic signal as:
\begin{equation}
s_a(t) \approx a(t) \, e^{\mathrm{i} \phi(t)}.
\end{equation}

The next step is to filter the signal to whiten the detector's noise. 
The desired output is the whitened analytic signal, $s_w(t)$, which is the convolution of the analytic signal $s_a(t)$ with the whitening filter's impulse response, $h_w(t)$, i.e.
\begin{equation}
s_w(t) = (s_a * h_w)(t).
\end{equation}
In order to computing this convolution directly, we apply the \textit{Adiabatic Filtering Theorem} \ref{th:adiabfilt}.
The whitening filter for a detector like LIGO-I \cite{Sathya} is defined by its frequency response $H_w(f)$:
\begin{equation}
H_w(f) = \frac{1}{\sqrt{\frac{1}{5}\left(\frac{\nu_0}{f}\right)^4 + 2 + 2\left(\frac{f}{\nu_0}\right)^2}},
\end{equation}
where $\nu_0$ is the frequency of the detector's minimum noise.

Applying the theorem's result directly, the whitened analytic signal $s_w(t)$ is given by the input signal's complex amplitude, multiplied by the whitening filter evaluated at the instantaneous frequency:
\begin{equation}
s_w(t) \approx a(t) \, H_w(f(t)) \, e^{\mathrm{i}\phi(t)}.
\end{equation}

By substituting the expressions for $a(t)$, and $H_w(f(t))$,  we arrive at the final, formally-derived model for the whitened analytic gravitational wave signal:
\begin{equation}
s_w(t) \approx \frac{A \Pi(t) f_0^{2/3} \left(1 - \frac{t}{T_c}\right)^{-1/4}}{\sqrt{\frac{1}{5}\left(\frac{\nu_0}{f(t)}\right)^4 + 2 + 2\left(\frac{f(t)}{\nu_0}\right)^2}} \exp\left(\mathrm{i} \phi(t)\right).
\end{equation}
This model provides the leading-order asymptotic representation of the gravitational wave signal (with respect to the large parameter $\chi$) 
after whitening, with its validity rigorously supported by the adiabatic filtering theorem (see  \ref{app:adiabfilt}).


Building on the above theoretical foundations, we next assess its accuracy through numerical simulations. 
Specifically, we analyze the time--frequency properties of the whitened GW chirp using the Pseudo-Wigner--Ville  distribution. 
Figure~\ref{fig:pwvsurvace} displays the numerical PWVD representation computed via FFT with a Hanning window. The relevant parameters are reported in the caption.  
The ridge of the distribution closely follows the theoretical instantaneous frequency, illustrating the ability of the Pseudo-Wigner–Ville representation to track the chirp’s instantaneous frequency.
 This provides a general validation of the method’s effectiveness for analyzing gravitational wave signals.
\begin{figure}[h!]
    \centering
    \includegraphics[width=1\textwidth]{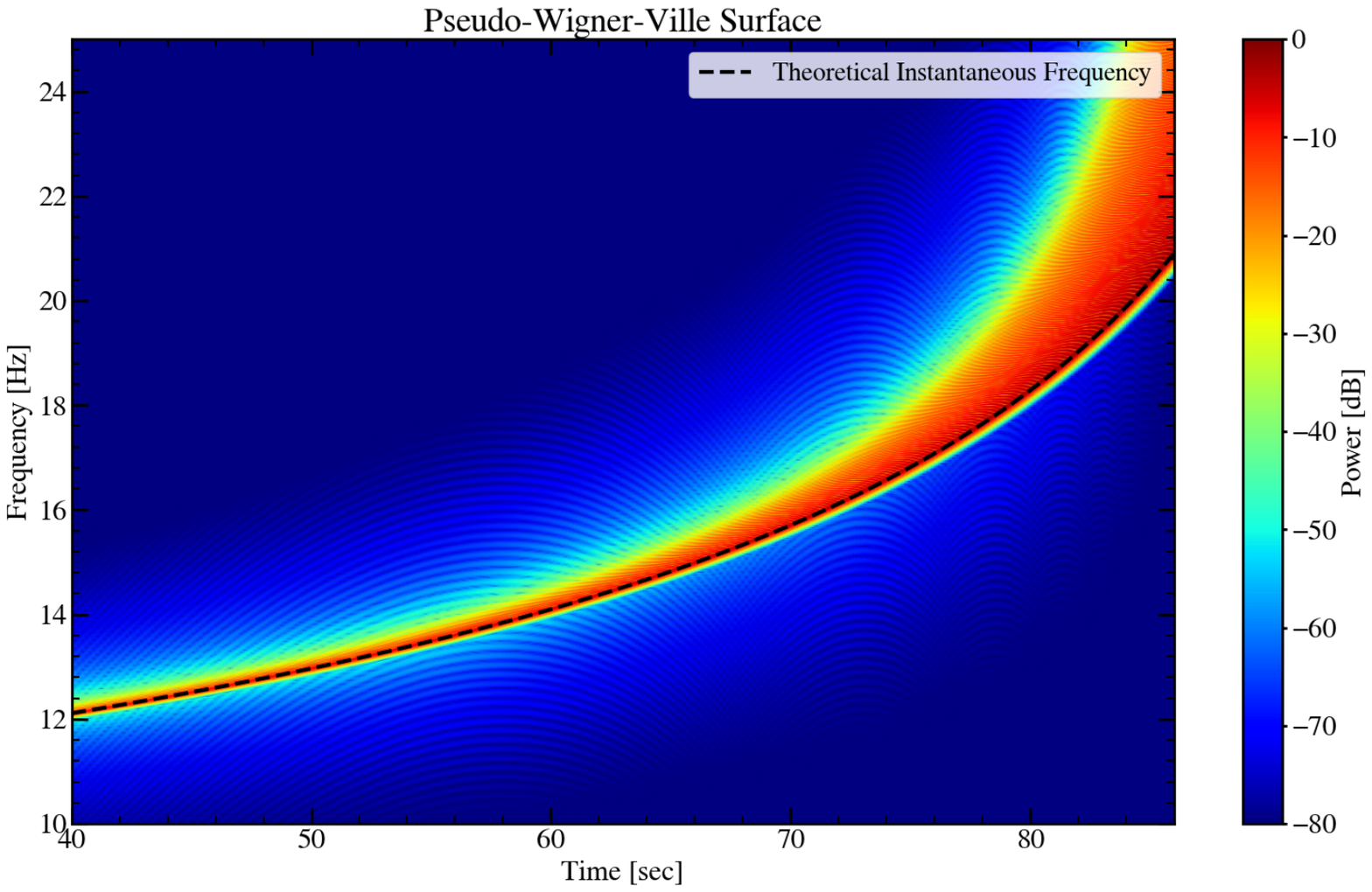}
    \caption{Time-frequency representation of the whitened gravitational wave signal using the PWVD. 
    The surface was computed numerically via the Fast Fourier Transform (FFT) with a Hanning window of $T_w = 50.0\, \mathrm{s}$ . 
    The chirp parameters are $f_0=10$ Hz, and $T_c=100\, \mathrm{s}$. 
    The $\nu_0=150$\, Hz  is the minimum noise frequency of the detector.
    The color axis represents the power in decibels (dB) relative to the global maximum, revealing the signal's energy concentration over time. 
    The ridge of the distribution closely follows the theoretical instantaneous frequency (black dashed line), 
    demonstrating the method's ability to track the signal's chirp evolution. The reddish band above the IF line is not a numerical artifact; it is the oscillating region of an incomplete Airy function, which attenuates exponentially below the IF.}
    \label{fig:pwvsurvace}
\end{figure}
The reddish tail above the instantaneous-frequency curve in Figure~\ref{fig:pwvsurvace} appears at first sight as an artifact, but it is in fact the oscillatory contribution predicted by the asymptotic Airy expansion.

Figure~\ref{fig:pwv_comparison} compares frequency slices of the PWVD at $t=50\,  \mathrm{s}$, 
computed with different approaches: direct numerical integration, FFT, uniform expansion, and the heuristic formula. 
Numerical methods and the uniform expansion agree closely, while the heuristic formula fails at higher frequencies.
Moreover, the frequency slice shows that the theoretical instantaneous frequency lies on the exponential rise of the Airy function, 
whereas the practical frequency estimate corresponds to the location of the Airy peak, as can be argued by Eq. (\ref{eq:BUAPWV}).
\begin{figure}[h!]
    \centering
    \includegraphics[width=1\textwidth]{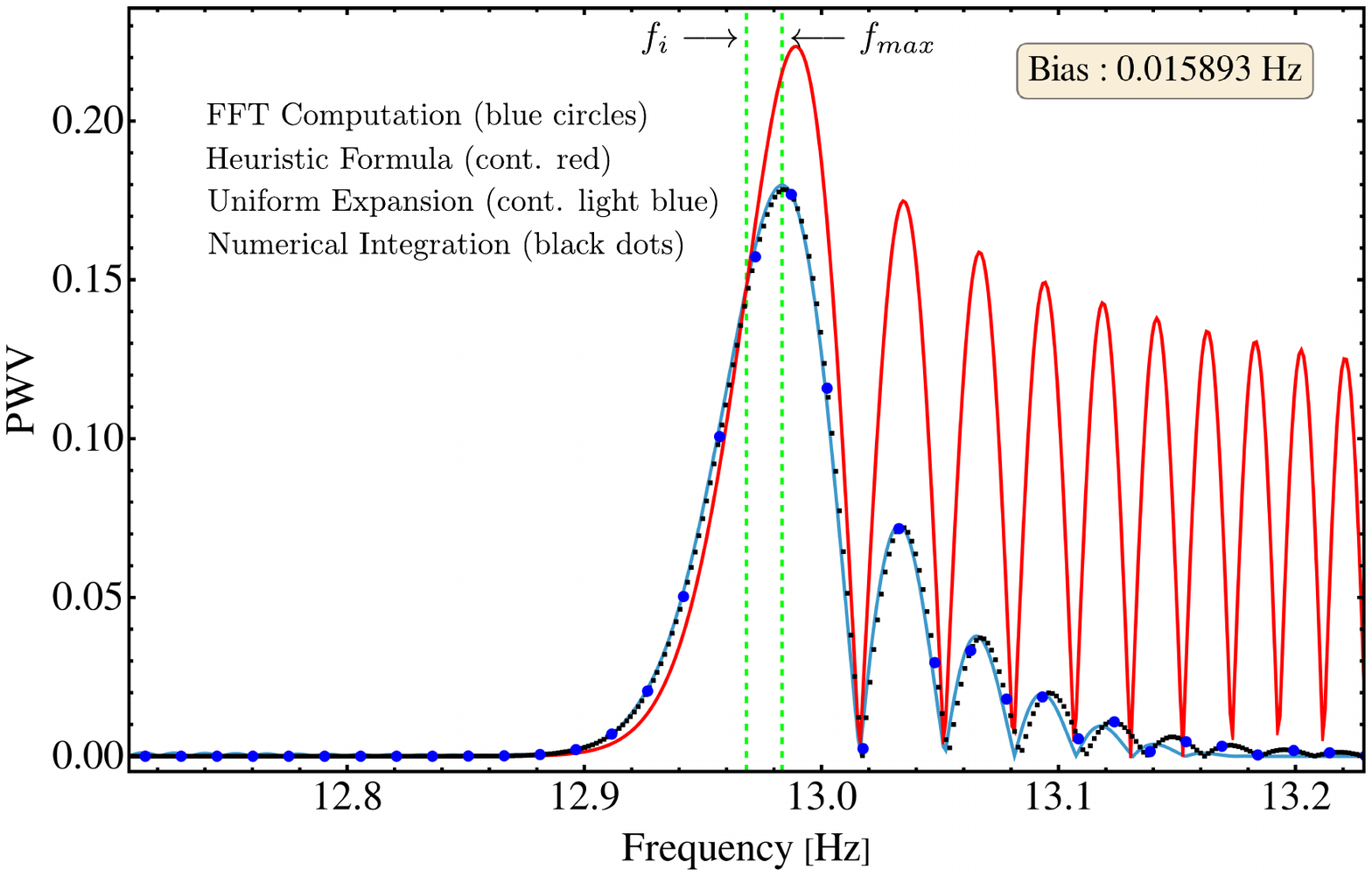}
    \caption{Comparison of the PWVD frequency slice ($t=50\,  \mathrm{s}$) of a whitened gravitational wave chirp as a function of frequency, computed using four different methods. 
    The results from direct numerical integration (black dots), FFT computation (blue circles), and the uniform expansion (continuous light-blue line) show excellent agreement across a broad region of the time-frequency plane. 
    The heuristic formula (continuous red line) accurately approximates the initial rise of the signal but fails to reproduce the correct oscillatory behavior at higher frequencies. The labels $f_i$ and $f_{max}$ indicates the instantaneous frequency and the asymptotic formula estimated peak respectively. 
    The chirp parameters are $f_0=10$ Hz, and $T_c=100\, \mathrm{s}$. We use $\nu_0=150$\, Hz, and $T_w = 50.0\, \mathrm{s}$.}
    \label{fig:pwv_comparison}
\end{figure}
We also evaluated the absolute ($\sim 0.0036$) and relative ($\sim 2.6$ \%) errors  between the numerical and asymptotic results in the 
vicinity of the PWVD peak. Both errors  remain small along the entire 
instantaneous–frequency line, confirming the robustness of the asymptotic description.

A systematic investigation of the bias in instantaneous frequency estimation by peak detection further supports the above conclusions. 
Figure~\ref{fig:wvestimations} shows the ridge-based frequency estimate compared to the theoretical curve, together with the resulting bias. 
The analysis reveals a small and time-dependent positive bias, whose magnitude is quantitatively consistent with the asymptotic predictions of eq. (\ref{eq:general_bias}).
\begin{figure}[h!]
    \centering
    \includegraphics[width=0.8\textwidth]{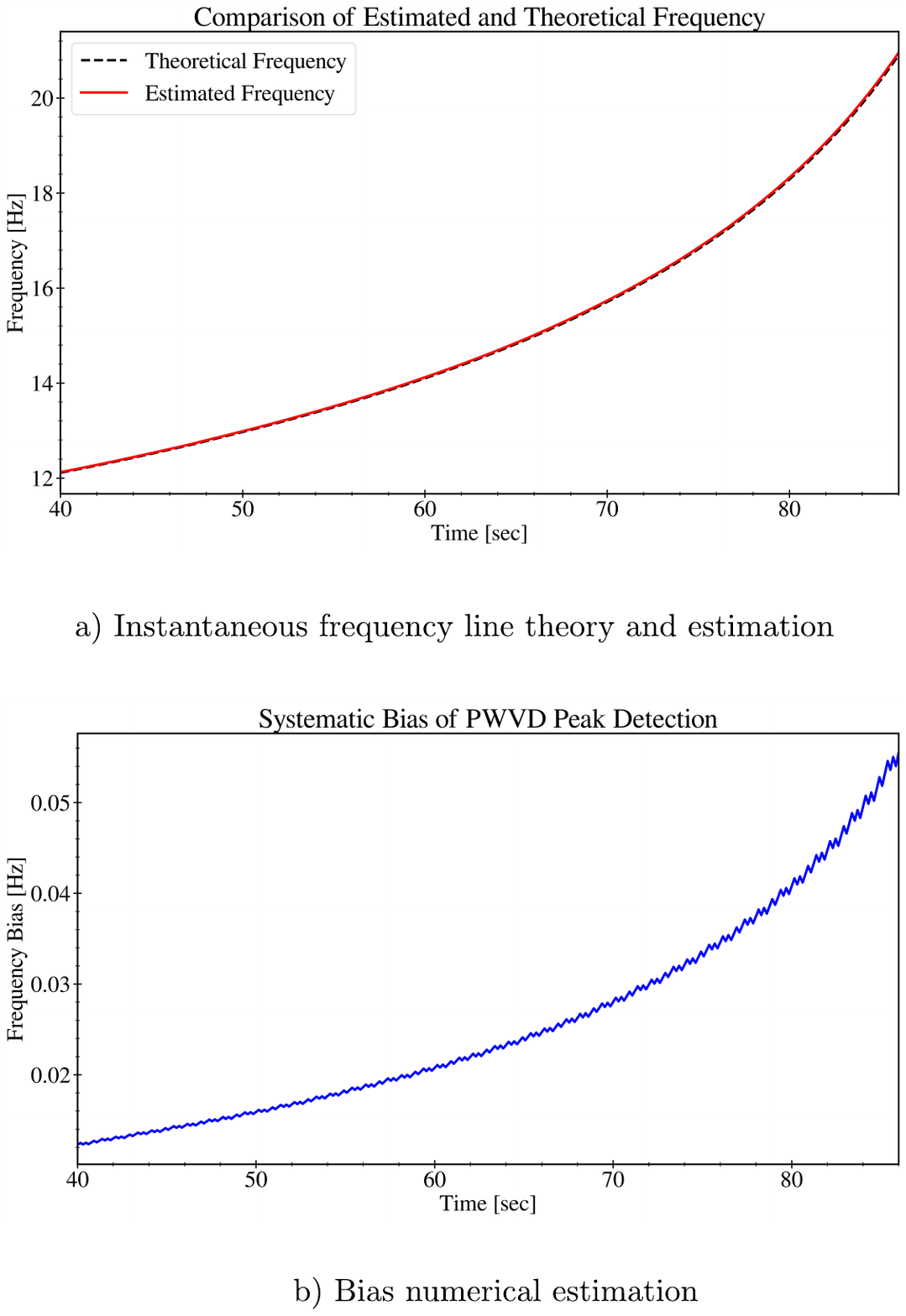}
    \caption{Analysis of the systematic bias in IF estimation from the Pseudo-Wigner-Ville (PWV) distribution. 
    The chirp parameters are $f_0=10$ Hz, and $T_c=100\, \mathrm{s}$.  We use $\nu_0=150$\, Hz, and $T_w = 50.0 \, \mathrm{s}$.
    \textbf{(a)} Comparison between the theoretical IF (black dashed line) and the IF estimated from the ridge of the PWVD surface (solid red line). 
    The peak at each time step was located with sub-sample precision using cubic spline interpolation to overcome the FFT grid resolution. 
    \textbf{(b)} The systematic bias, defined as the difference between the estimated and theoretical IF. 
    The result reveals a small, positive, and time-varying bias inherent to the PWV-based estimation method, consistent with theoretical predictions.}
    \label{fig:wvestimations}
\end{figure}
Table~\ref{tab:bias} summarizes the bias values obtained using different estimation methods, the numbers are relative to the frequency slice of Fig. \ref{fig:pwv_comparison}. 
The values derived from the asymptotic formula and from numerical simulations show close agreement, validating the theoretical framework.
When compared to the numerical results, the analytical predictions for the bias derived from Eqs. (\ref{eq:general_bias}) and (\ref{eq:general_bias_explicit}) exhibit a small systematic overestimation.
\begin{table}[h]
\centering
\caption{Comparison of bias values calculated with different methods. 
Relative to the slice $t=50\,  \mathrm{s}$, with $f_0=10$ Hz, $T_c=100\,  \mathrm{s}$, $T_w=50 \,  \mathrm{s}$, and $\nu_0=150$ Hz.}
\label{tab:bias}
\begin{tabular}{lc}
\hline
Calculation Method & Bias Value [Hz] \\
\hline
Numerical (Fig.~\ref{fig:pwv_comparison}) & 0.0159 \\
Equation (\ref{eq:BUAPWV}) & 0.0150 \\
Equation (\ref{eq:general_bias}) & 0.0209 \\
Equation (\ref{eq:general_bias_explicit}) & 0.0204 \\
\hline
\end{tabular}
\end{table}

In order to address the robustness of the uniform asymptotic expansion, we performed two additional validation tests. 
First, we repeated the comparison between the asymptotic expression and the numerical PWVD over several time slices along the chirp, rather 
than restricting the analysis to the representative slice at \(t=50\,\mathrm{s}\). 
The results, shown in Fig.~\ref{fig:time_robustness}, demonstrate that the agreement between the uniform expansion and the numerical 
computation persists over a broad portion of the chirp evolution. In particular, both the main lobe and the Airy-type 
oscillatory tail are reproduced with the same qualitative and quantitative accuracy as in the central slice.

\begin{figure}[t]
    \centering
    \includegraphics[width=0.95\linewidth]{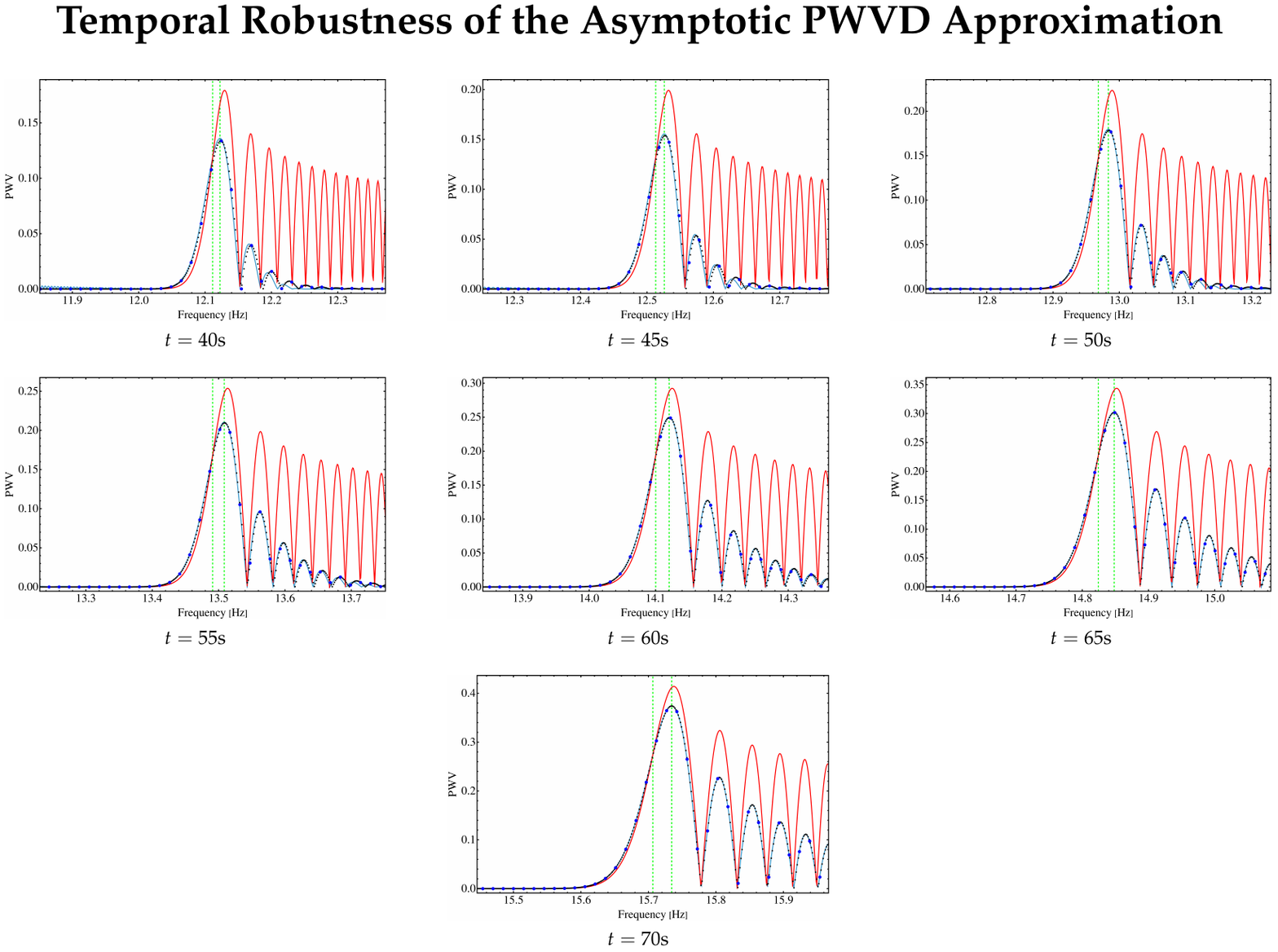}
    \caption{
    Time robustness of the uniform asymptotic expansion for the PWVD of the whitened gravitational-wave chirp. Frequency slices are shown at different analysis times along the chirp evolution. The numerical PWVD, computed by FFT, is compared with the corresponding asymptotic evaluation for the same signal and window parameters.
     The agreement remains accurate over the selected time interval, both in the position and amplitude of the dominant peak and in the first oscillatory lobes. 
     This demonstrates that the asymptotic approximation is not tuned to a single slice, but remains stable as the instantaneous frequency and the local curvature of the chirp evolve in time.
    }
    \label{fig:time_robustness}
\end{figure}

Second, we tested the stability of the comparison in the presence of additive broadband white Gaussian noise. 
The noise was injected directly in the real time-domain whitened strain, sampled at \(f_s=4096\,\mathrm{Hz}\), 
before constructing the analytic signal and computing the PWVD slice by FFT. 
The signal-to-noise ratio was defined as
\[
\mathrm{SNR}=\frac{\mathrm{rms}(h_w)}{\mathrm{rms}(n)},
\]
with the RMS evaluated on the physical signal interval. No band-limiting was applied to the injected noise, so that the noise occupies the full one-sided Nyquist band \(0\leq f\leq f_s/2\). 
As shown in Fig.~\ref{fig:noise_robustness}, the deterministic asymptotic profile remains consistent with the noisy FFT estimates at high SNR. 
The main deviations occur, as expected, in the low-amplitude oscillatory tail (more evident in the decibel representation), where the additive noise floor becomes dominant.

The noisy-signal tests confirm that the Airy-type features of the PWVD slice
persist under broadband Gaussian noise, including the asymmetric profile, the
oscillatory tail, the caustic decay, and the ridge bias. This supports the
robustness of the asymptotic description beyond the ideal noiseless case.

\begin{figure}[t!]
    \centering
    \includegraphics[width=0.8\linewidth]{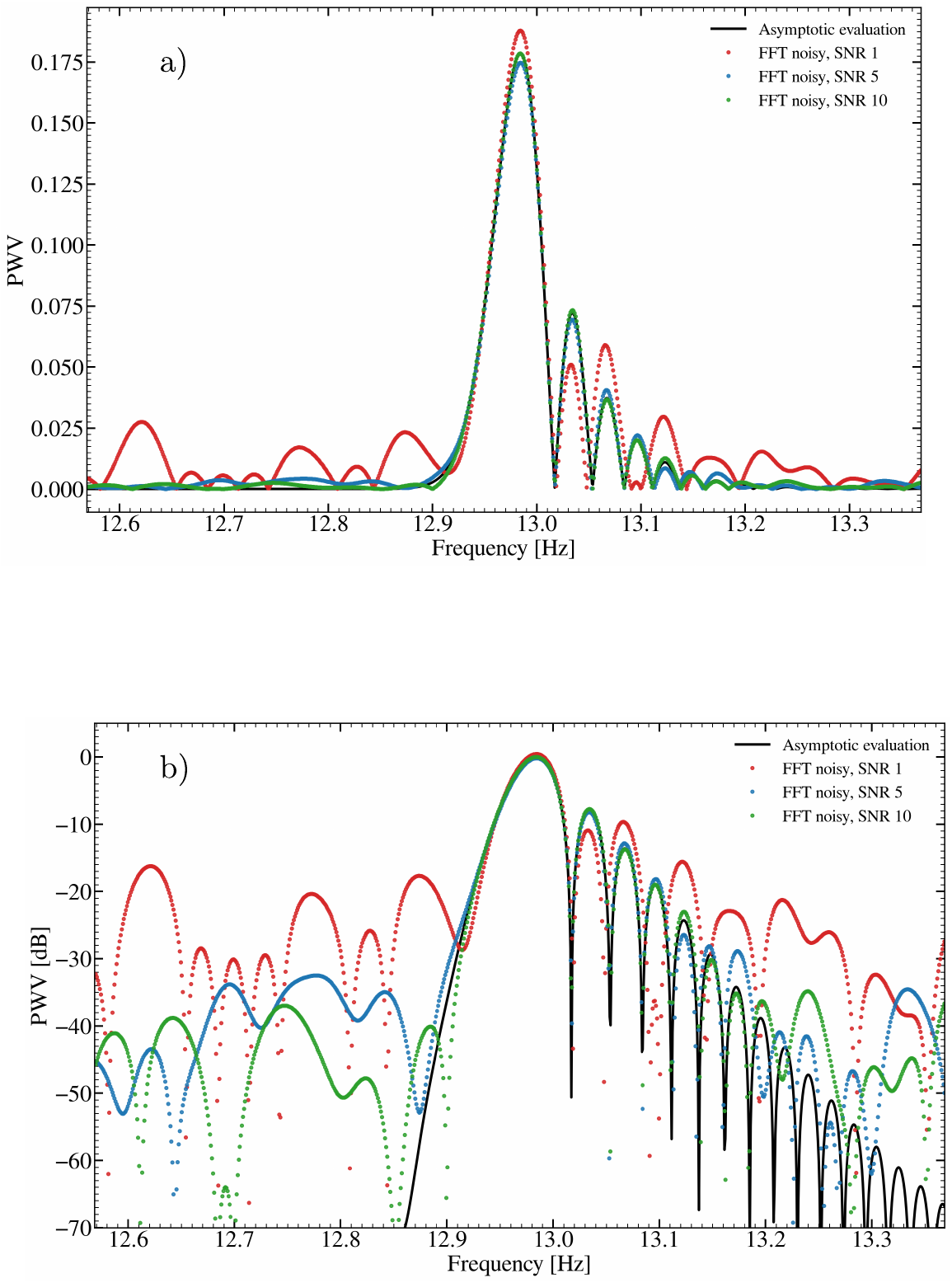}
    \caption{
    Robustness of the PWVD frequency slice at \(t=50\,\mathrm{s}\) in the presence of broadband additive white Gaussian noise. The black curve denotes the asymptotic evaluation, while the colored curves show FFT-based PWVD slices obtained after injecting white Gaussian noise into the time-domain whitened chirp for different RMS-based SNR values. 
    (a) Linear-scale representation. The dominant PWVD peak remains well localized and close to the asymptotic prediction, especially for increasing SNR. Noise mainly affects the lower-amplitude oscillatory structures away from the peak. 
    (b) Decibel-scale representation of the same slices. The logarithmic scale makes the broadband noise floor visible and shows that the Airy-like main structure is robust near the maximum, whereas the far oscillatory tail is progressively dominated by noise at lower SNR.
    }
    \label{fig:noise_robustness}
\end{figure}

\section{Application to Nonlinear Chirps for Radar Pulse-Compression}
\label{sec:ConcaveChirp}

Nonlinear chirps constitute a flexible class of frequency-modulated signals whose instantaneous frequency law can be adapted to the target application. 
In particular, concave nonlinear chirps have been considered in radar pulse-compression contexts because the curvature of the time-frequency trajectory can improve correlation properties 
and sidelobe behavior with respect to standard linear chirps ~\cite{Roy2021NLC}. 
Motivated by this observation, and in order to address the robustness of the proposed asymptotic expansion beyond a single convex chirp example, 
we also tested a representative class of concave chirps.

In the following of this section, we consider a nonlinear concave chirp whose instantaneous frequency is prescribed through a normalized time variable
\beq
u=\frac{t}{T_{\mathrm{chirp}}}, \qquad 0\leq u\leq 1,
\eeq
where $T_{\mathrm{chirp}}$ is the chirp duration.
The normalized instantaneous frequency is chosen as
\beq
\widehat f(u)
=
\alpha u
+\frac{1-\alpha}{2}u^{x_1}
+\frac{1-\alpha}{2}u^{x_2},
\eeq
with $\alpha=1.8$, $x_1=1.745$,$x_2=1.79$ (chosen according to the numerical example shown in~\cite{Roy2021NLC}).

The dimensional instantaneous frequency is then obtained as
\beq
f_i(t)=f_L+(f_H-f_L)\widehat f(t/T_{\mathrm{chirp}}).
\label{eq:concave}
\eeq
This construction satisfies \(\widehat f(0)=0\) and \(\widehat f(1)=1\), so that the chirp spans the prescribed frequency interval \([f_L,f_H]\). 
Since \(\alpha>1\) and \(x_1,x_2>1\), the second derivative of \(\widehat f\) is negative over the interval, producing a concave frequency law: 
the chirp rate is larger at early times and decreases progressively as the signal approaches the upper part of the band. 

This new model (\ref{eq:concave}) provides a complementary test with respect to the convex chirps considered above, 
because the curvature of the instantaneous frequency has the opposite sign.

For the numerical validation we use the parameter values $f_L=10~\mathrm{Hz}$, $f_H=500~\mathrm{Hz}$, $T_{\mathrm{chirp}}=100~\mathrm{s}$, and $T_w=50~\mathrm{s}$,
and analyze the PWVD  around \(t=50~\mathrm{s}\). 

Figure~\ref{fig:concave_pwvd_surface} shows the PWVD surface together with the theoretical instantaneous frequency. 
The ridge follows the expected chirp trajectory over the full analyzed time interval. 
The corresponding ridge bias, reported in Fig.~\ref{fig:concave_ridge_bias}, remains small and smooth over time, indicating that the observed frequency shift is not a local artifact of a single time slice. 
Compared with the gravitational chirp considered above, this new signal has a
less pronounced curvature over the analyzed time-frequency region. The
resulting PWVD ridge bias is correspondingly smaller, indicating that the
frequency displacement induced by the finite-time PWVD kernel is reduced when
the local chirp curvature is weaker.

Finally, Fig.~\ref{fig:concave_slice_t50} compares the PWVD slice at \(t=50\,\mathrm{s}\) with the heuristic approximation, the uniform asymptotic expansion, the FFT-based computation, and the direct numerical integration. 
The uniform expansion accurately reproduces the numerical reference across both the oscillatory and evanescent sides of the caustic,
 while the heuristic formula displays the expected loss of accuracy near the transition region.

\begin{figure}[!htbp]
    \centering
    \includegraphics[width=0.92\textwidth]{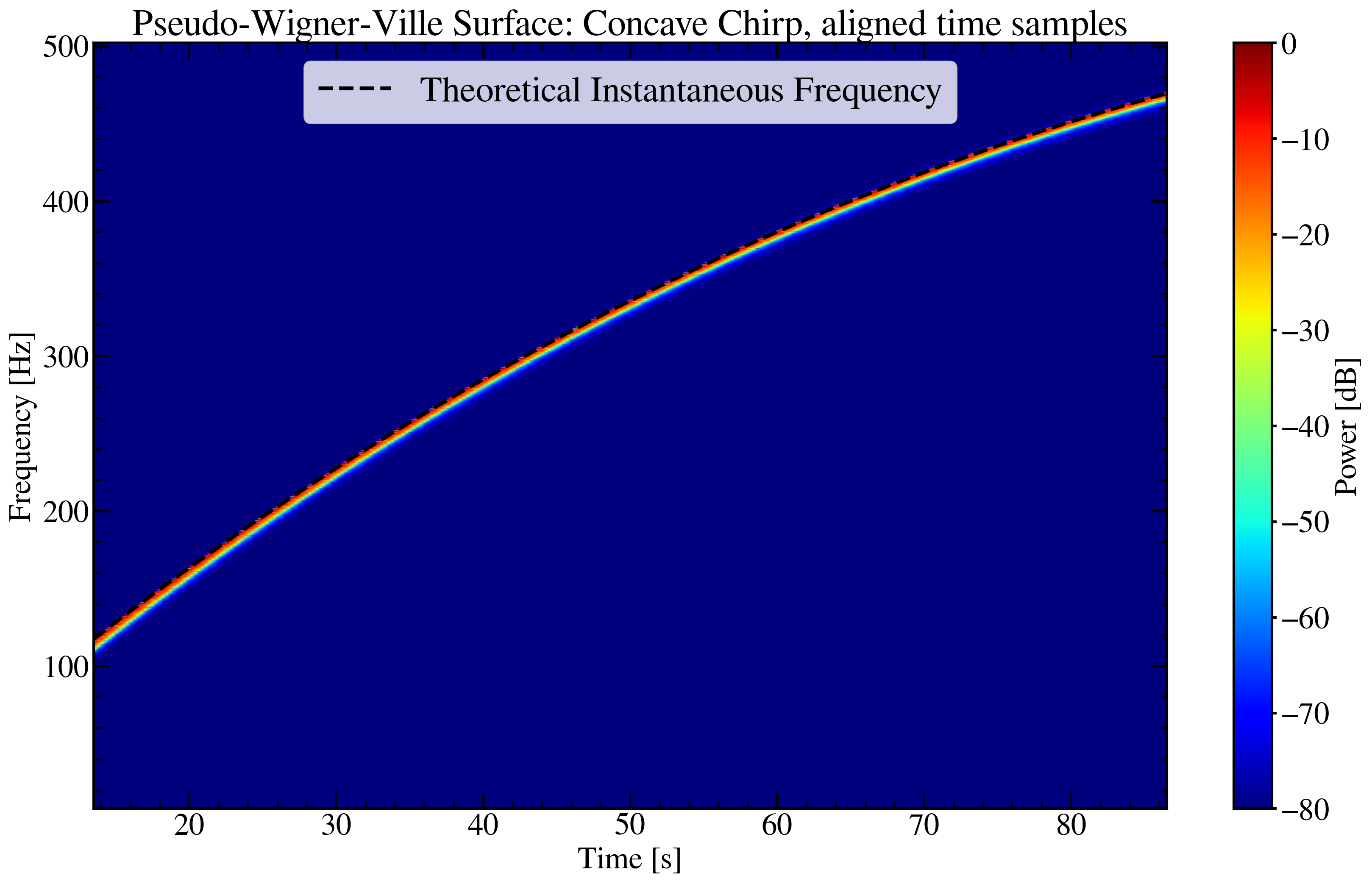}
    \caption{
    PWVD of the concave nonlinear chirp for the parameters $\alpha=1.8$, $x_1=1.745$,$x_2=1.79$, $f_L=10~\mathrm{Hz}$, $f_H=500~\mathrm{Hz}$, $T_{\mathrm{chirp}}=100~\mathrm{s}$ . 
    The dashed black curve denotes the theoretical instantaneous frequency. 
    The PWVD ridge follows the expected chirp trajectory over the full analyzed time interval, providing a global consistency check of the time-frequency representation.
    }
    \label{fig:concave_pwvd_surface}
\end{figure}

\begin{figure}[!htbp]
    \centering
    \includegraphics[width=0.88\textwidth]{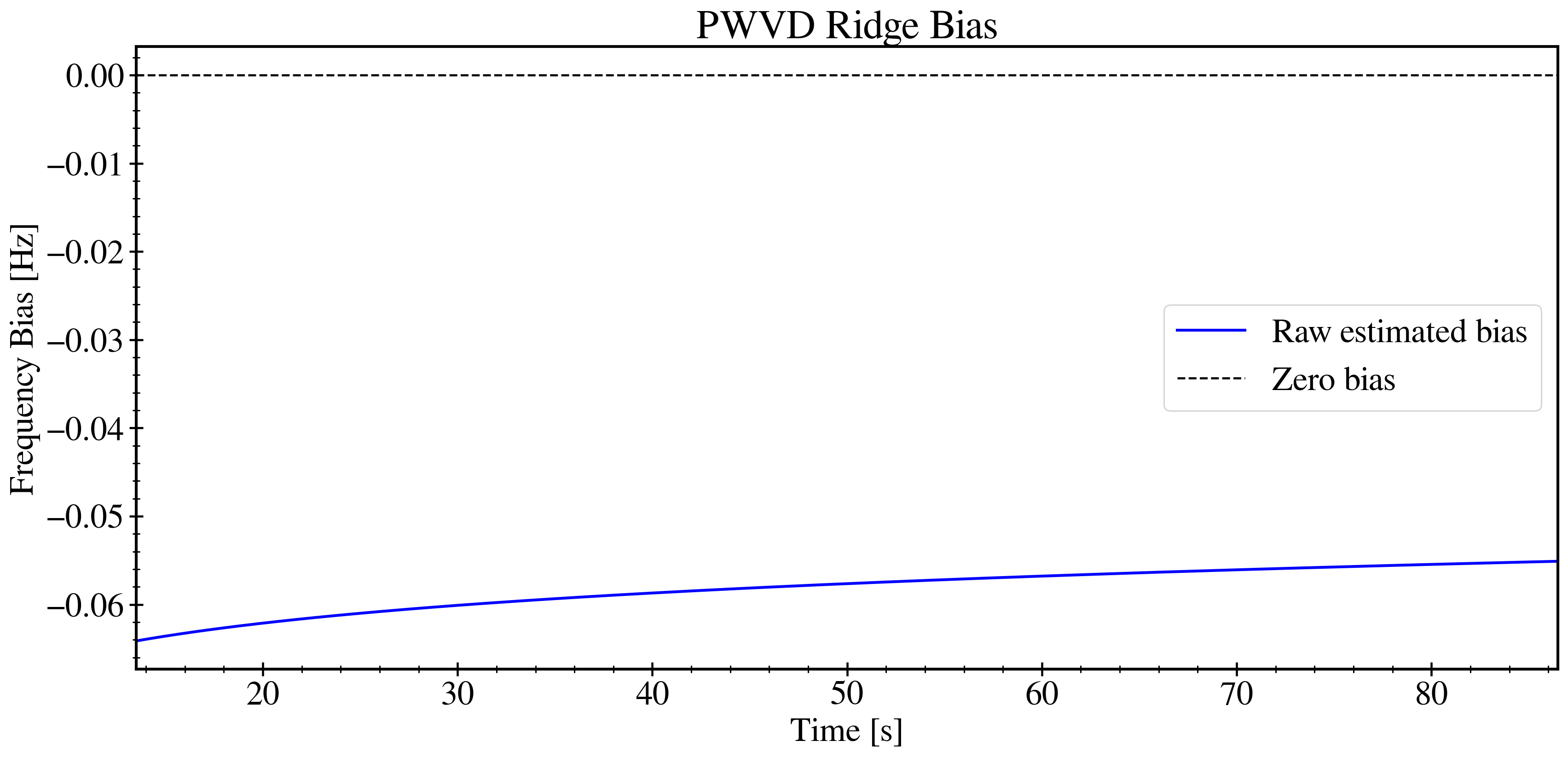}
    \caption{
    Estimated PWVD ridge bias for the concave nonlinear chirp as a function of time. 
    The chirp parameters are : $\alpha=1.8$, $x_1=1.745$,$x_2=1.79$, $f_L=10~\mathrm{Hz}$, $f_H=500~\mathrm{Hz}$, $T_{\mathrm{chirp}}=100~\mathrm{s}$ . 
    The bias is computed as the difference between the ridge frequency extracted from the PWVD and the theoretical instantaneous frequency. 
    Its smooth and small variation confirms that the frequency displacement is systematic and remains stable across the analyzed time interval.
    }
    \label{fig:concave_ridge_bias}
\end{figure}

\begin{figure}[!htbp]
    \centering
    \includegraphics[width=0.92\textwidth]{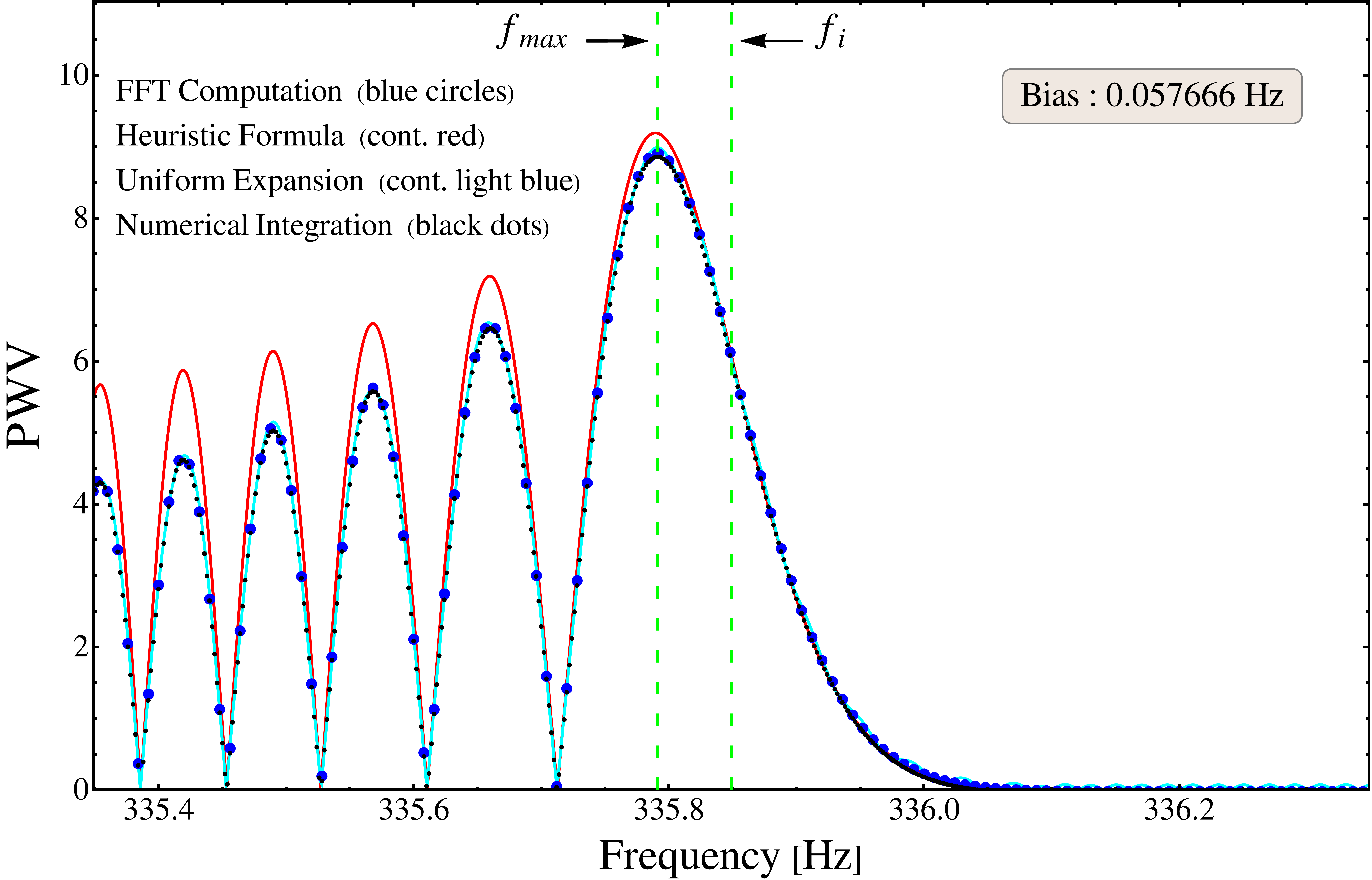}
    \caption{
    PWVD slice at \(t=50\,\mathrm{s}\) for the concave nonlinear chirp. 
    The chirp parameters are : $\alpha=1.8$, $x_1=1.745$,$x_2=1.79$, $f_L=10~\mathrm{Hz}$, $f_H=500~\mathrm{Hz}$, $T_{\mathrm{chirp}}=100~\mathrm{s}$ . 
    The FFT-based computation, direct numerical integration, heuristic formula, and uniform asymptotic expansion are compared on the same frequency interval. 
    The vertical dashed lines indicate the theoretical instantaneous frequency \(f_i\) and the frequency \(f_{\max}\) at which the PWVD ridge reaches its maximum. 
    The inset reports the corresponding ridge bias value.
    }
    \label{fig:concave_slice_t50}
\end{figure}

\section{Conclusions}
\label{sec:conclusions}

In this work, we have established a mathematically grounded framework to describe the interference structure of the Pseudo Wigner-Ville Distribution (PWVD) for nonlinear chirps.
Our core contribution lies in establishing a bridge between the classical uniform asymptotic theory of fold caustics \cite{chester1957, borovikov1994} 
and the applied domain of Time-Frequency signal processing. 
By adapting this mathematical machinery to the specific topological constraints of windowed distributions, we demonstrated that the PWVD is universally governed by 
symmetric incomplete Airy functions. 
This result offers a general mathematical key to decipher the complex phase-space interference patterns generated by quadratic time-frequency transforms. 
Indeed, it allows to reliably distinguish genuine physical features of the source from the transform-induced artifacts 
that have historically hindered the application of Wigner distributions in precision science~\cite{Mecklen2}.

Furthermore, our analysis clarifies and properly positions the connection between heuristic signal processing and geometric semiclassical mechanics \cite{Berry1977}. 
We demonstrated that while Berry's geometric chord construction provides an exact analogy for the \textit{phase} of the interference fringes, 
it is fundamentally incomplete at the \textit{amplitude} level, and should be analytically continued in the complex plane to be accurate in the evanescent region. 
By deploying the full uniform asymptotic machinery, we successfully incorporated the amplitude modulation and finite-window truncation effects that are inherently 
absent in infinite-domain quantum probability models.

The practical power of this framework was validated through the derivation of an analytical formula for the systematic bias inherent in peak-based instantaneous frequency estimation. 
The excellent agreement with numerical simulations of a gravitational-wave chirp { (with and without the presence of additive Gaussian noise) and of a radar nonlinear chirp} confirms that the PWVD, when supported by proper mathematical modeling of its nonlinearities, becomes a reliable quantitative tool. 
This demonstrates its readiness for demanding data analysis applications where precise signal characterization is paramount.

Looking forward, this work opens several avenues for research in nonlinear science:
\begin{enumerate}[(i)]
    \item \emph{Higher-order Caustics and Catastrophes}: From a mathematical perspective, our analysis addressed the {\it fold} catastrophe (two coalescing stationary points), 
    which is canonically mapped to the Airy function (see~\cite{ArnoldBook} for important results on catastrophe theory).
     A natural generalization involves chirps with higher-order phase degeneracies (e.g., inflection points in the frequency track) or multi component chirp signal.
      Investigating the PWVD in these regimes would require uniform expansions based on Pearcey functions (associated with the {\it cusp} catastrophe) 
      higher-order canonical integrals, {and uniform asymptotics for cross-term interference,} offering new insights into the topology of time-frequency surfaces.
    \item \emph{Advanced GW Data Analysis}: Within gravitational-wave science, this analytical tool can be integrated into search pipelines to provide rapid parameter pre-estimates and to develop robust, unmodeled search algorithms capable of detecting exotic physics beyond General Relativity.
    \item \emph{Dispersive and Nonlinear Media}: Finally, the proposed framework is applicable to other complex physical systems characterized by dispersive propagation,
     such as nonlinear optics, hydrodynamics, and plasma physics. In these fields, characterizing the fine structure of pulses via the PWVD could reveal subtle nonlinear 
     modulation effects often masked by standard spectrogram analysis~\cite{felsen1973radiation}.
\end{enumerate}

In conclusion, the methodology presented here offers a versatile instrument for any field dealing with nonlinear chirps in complex systems, 
paving the way for high-precision, non-parametric TF-based signal analysis.

\appendix

\section{Asymptotic Expansion of the Hilbert Transform}
\label{app:hilbert_theorem}

We present the theorem that provides a complete and rigorous derivation of the asymptotic formula for the Hilbert transform of an 
AM--FM signal (chirp) of the form $s_N(t)=a(t)\cos(N\varphi(t))$ as $N\to\infty$.

\begin{theorem}[Asymptotic Expansion of Analytic Signal]
\label{th:asinto_appendix}
Let $a \in C^2([0, T])$ and $\varphi \in C^3([0, T])$. We assume:
\begin{enumerate}
    \item \textbf{Non-degeneracy:}
    \[ \inf_{t \in [0, T]} |\varphi'(t)| := c > 0. \]
    \item \textbf{Bounded derivatives:} The functions and their derivatives $a, a', a''$ and $\varphi', \varphi'', \varphi'''$ are bounded on $[0, T]$. (This is guaranteed by the smoothness assumption on a closed interval).
\end{enumerate}
Let $\mathcal{H}$ be the Hilbert transform. Let the signal $s(t)$ be given by
\[ s(t) = a(t) \cos(N\varphi(t)). \]
Then, for any $t \in (0, T)$, the following asymptotic expansion holds for large $N$:
\[ \mathcal{H}\{s(t)\} = a(t)\sin(N\varphi(t)) - \frac{1}{N} \frac{d}{dt}\left(\frac{a(t)}{\varphi'(t)}\right) \cos(N\varphi(t)) + R_N(t), \]
where the remainder term $R_N(t)$ satisfies the uniform bound
\[ \sup_{t \in [0, T]} |R_N(t)| \leq C N^{-2}, \]
with the constant $C$ depending only on the norms $\|a\|_{W^{2,\infty}([0,T])}$, $\|\varphi\|_{W^{3,\infty}([0,T])}$, and the constant $c$.
\end{theorem}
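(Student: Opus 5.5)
The plan is to work directly with the principal-value representation (\ref{eq:hilbtime}) of $\mathcal{H}$. This requires $s$ on all of $\R$, whereas the hypotheses are formulated on $[0,T]$. I will therefore read them, as the paper does for its chirp models, as holding for the (extended) functions $a,\varphi$ on a neighbourhood of the support of $s$, with the same norms $\|a\|_{W^{2,\infty}}$, $\|\varphi\|_{W^{3,\infty}}$ and the same constant $c$. All constants below will depend only on these quantities. This reading is exactly what makes a bound with $\sup_{t\in[0,T]}$ possible, because then no point of $[0,T]$ is distinguished from the interior. Write
\[
s(t)=\mathrm{Re}\,g(t),\qquad g(t)=a(t)e^{\mathrm{i}N\varphi(t)} .
\]
Since $\mathcal{H}$ is real and linear, $\mathcal{H}s=\mathrm{Re}\,\mathcal{H}g$, so it suffices to expand $\mathcal{H}g(t)$.

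Fix $t$ and a scale $\delta>0$, depending only on $c$ and $\|\varphi''\|_\infty$, so that $|\varphi'|\ge c/2$ on $[t-\delta,t+\delta]$. Split the kernel as
\[
\frac{1}{t-\tau}=\frac{\eta(t-\tau)}{t-\tau}+\frac{1-\eta(t-\tau)}{t-\tau},
\]
where $\eta$ is a smooth even cutoff equal to $1$ on $[-\delta/2,\delta/2]$ and supported in $[-\delta,\delta]$.

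\emph{Far part.} Here the kernel $k(u)=(1-\eta(u))/u$ is smooth, with $k,k',k''$ bounded and integrable uniformly in $t$. Write $e^{\mathrm{i}N\varphi}=(\mathrm{i}N\varphi')^{-1}\frac{d}{d\tau}e^{\mathrm{i}N\varphi}$ and integrate by parts twice. There are no boundary terms, thanks to the cutoff and the decay of $g$. The first integration produces the $1/N$ term $-\frac{1}{\mathrm{i}N}\int \frac{d}{d\tau}\bigl(a(\tau)k(t-\tau)/\varphi'(\tau)\bigr)e^{\mathrm{i}N\varphi(\tau)}\,d\tau$. The second integration puts this in turn into $O(N^{-2})$, because $\frac{d}{d\tau}(\cdot/\varphi')$ applied twice uses at most $a''$ and $\varphi'''$. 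So the far part is $O(N^{-2})$ uniformly in $t$.

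\emph{Near part.} Substitute $\tau=t-u$. Taylor-expand to the orders the hypotheses allow:
\[
\varphi(t-u)=\varphi(t)-\varphi'(t)u+\tfrac12\varphi''(t)u^2+O(u^3),\qquad a(t-u)=a(t)-a'(t)u+O(u^2).
\]
Factor out $e^{\mathrm{i}N\varphi(t)}$. What remains is the principal value of
\[
\int \eta(u)\,\frac{a(t-u)}{u}\,e^{-\mathrm{i}N\varphi'(t)u}\,e^{\mathrm{i}N(\frac12\varphi''(t)u^2+O(u^3))}\,du .
\]
The model integral $\frac1\pi\,\mathrm{p.v.}\int \eta(u)u^{-1}e^{-\mathrm{i}N\varphi'(t)u}\,du$ equals $-\mathrm{i}\,\mathrm{sgn}(\varphi'(t))$ up to $O(N^{-\infty})$. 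With $\varphi'>0$ this gives the leading term $-\mathrm{i}\,g(t)$, whose real part is $a(t)\sin(N\varphi(t))$.

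The first-order corrections come from two sources. One is $-a'(t)u/u$, a nonsingular term. The other is the quadratic phase, handled by expanding $e^{\mathrm{i}N\varphi''u^2/2}$ is not legitimate for $|u|\gg N^{-1/2}$. Instead I will treat it by one more integration by parts against $e^{\mathrm{i}N\varphi(t-u)}$ itself, using $\frac{d}{du}\bigl[(-\mathrm{i}N\varphi'(t-u))^{-1}e^{\mathrm{i}N\varphi(t-u)}\bigr]=e^{\mathrm{i}N\varphi(t-u)}$. The singular factor $1/u$ should be kept by subtracting its value at $u=0$, i.e. by writing $a(t-u)/\varphi'(t-u)=a(t)/\varphi'(t)-u\,\frac{d}{dt}(a/\varphi')(t)+O(u^2)$. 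This is where the combination $\frac{d}{dt}\bigl(a/\varphi'\bigr)$ in the statement should emerge, together with the boundary contribution at $|u|=\delta$ that cancels against the far part's first integration by parts. Taking real parts then gives $-\frac1N\frac{d}{dt}(a/\varphi')\cos(N\varphi)$.

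The main obstacle is this last bookkeeping step: showing that every $1/N$ contribution collapses exactly to $-\frac1N\frac{d}{dt}(a/\varphi')\cos(N\varphi(t))$, and that what is left is $O(N^{-2})$ with only $C^2$ amplitude and $C^3$ phase. The remainder $O(u^2)$ in the Taylor expansion of $a/\varphi'$, divided by $u$, must be shown to be Lipschitz with a controlled constant, so that one further integration by parts yields $N^{-2}$. Uniformity in $t\in[0,T]$ follows because $\delta$, $\eta$ and all Taylor constants depend only on the stated norms and $c$, never on the distance of $t$ to $0$ or $T$. I would first test the cancellation on the exact case $a\equiv1$, $\varphi(t)=t+\epsilon t^2$, and only then run the general estimate.
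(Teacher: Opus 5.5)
The step you call ``the main obstacle'' is where the argument fails, and it cannot be repaired. Carried out correctly, your own near/far decomposition makes the order-$N^{-1}$ contributions cancel; they do not collapse to $-\frac1N\frac{d}{dt}(a/\varphi')\cos(N\varphi)$.

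The far part is $O(N^{-2})$ by your two integrations by parts. In the near part, the term $-a'(t)$ from $-a'(t)u/u$ multiplies $\int\eta(u)e^{\mathrm{i}N(\varphi(t-u)-\varphi(t))}\,du$. This is a non-stationary integral with compactly supported amplitude and no endpoints, so it is $O(N^{-2})$, not of order $N^{-1}$.

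More systematically, substitute $v=\varphi(t)-\varphi(t-u)$. The whole near part becomes $\frac1\pi e^{\mathrm{i}N\varphi(t)}\,\mathrm{p.v.}\!\int\tilde\eta(v)\,r(v)\,v^{-1}e^{-\mathrm{i}Nv}\,dv$, with $r(v)=v\,a(t-u)/\bigl(u\,\varphi'(t-u)\bigr)$ and $r(0)=a(t)$. This equals $-\mathrm{i}\,a(t)e^{\mathrm{i}N\varphi(t)}$ plus integrals $\int\psi(v)e^{-\mathrm{i}Nv}\,dv$ with $\psi\in C^1$, $\psi'\in L^1$ and no boundary terms. One integration by parts and Riemann--Lebesgue make these $o(N^{-1})$, so the phase curvature contributes nothing at first order either.

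Also, with a smooth partition $\eta+(1-\eta)$, neither piece produces a ``boundary contribution at $|u|=\delta$''. The cancellation you plan to exploit does not exist. What your method actually yields is $\mathcal{H}s=a\sin(N\varphi)+o(N^{-1})$ under the stated hypotheses, with no first-order term. With one more derivative on $a$ and $\varphi$ the remainder improves to $O(N^{-2})$.

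This is not a weakness of your technique: the statement is false, and its $N^{-1}$ term is spurious. Take $\varphi(t)=t$ and $a(t)=1/(1+t^2)$, which satisfy every hypothesis on any $[0,T]$ and, in your global reading, on $\R$. The splitting $\frac{e^{\mathrm{i}Nt}}{1+t^2}=\frac{1}{2\mathrm{i}}\bigl(\frac{e^{\mathrm{i}Nt}-e^{-N}}{t-\mathrm{i}}-\frac{e^{\mathrm{i}Nt}}{t+\mathrm{i}}\bigr)+\frac{e^{-N}}{2\mathrm{i}(t-\mathrm{i})}$ separates the upper- and lower-half-plane analytic (positive- and negative-frequency) parts. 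It gives exactly
\[
\mathcal{H}\Bigl\{\frac{\cos Nt}{1+t^2}\Bigr\}=\frac{\sin Nt+t\,e^{-N}}{1+t^2},
\]
whereas the statement predicts an additional $\frac{2t\cos Nt}{N(1+t^2)^2}+O(N^{-2})$. At $t=1$ and $N\in2\pi\mathbb{N}$ the two differ by $\frac{1}{2N}-\frac{e^{-N}}{2}$. Equivalently, Bedrosian's theorem gives $\mathcal{H}\{a\cos Nt\}=a\sin Nt$ exactly when $a$ is band-limited below $N/2\pi$, while the statement predicts an extra $-\frac1N a'(t)\cos Nt$.

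The general reason is that for $f\le 0$ the phase $N\varphi(\tau)-2\pi f\tau$ has derivative at least $Nc+2\pi|f|$. Hence the negative-frequency leakage of $a e^{\mathrm{i}N\varphi}$ is small to an order limited only by regularity. This is the same non-stationary-phase reasoning the paper's Remark applies to $s_-$. The paper's own ``proof'' consists only of references and gives no derivation of the $\frac{d}{dt}(a/\varphi')$ coefficient.

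If instead $s$ is truncated to $[0,T]$, the first-order corrections are generically nonzero endpoint terms of size $1/(N\,\mathrm{dist}(t,\{0,T\}))$, carrying the phases $N\varphi(0)$ and $N\varphi(T)$. These are neither of the stated form nor uniform in $t$.

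Your planned explicit test is the right instinct, and it would have exposed the problem at once. Use $\varphi(t)=t$ with a decaying $a$, rather than $a\equiv1$ and $\varphi=t+\epsilon t^2$, which violate your own global hypotheses.
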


\begin{remark}
Let us note that in the proof we decompose the signal as
\[
s_N(t) = \frac12 s_{+}(t) + \frac12 s_{-}(t), \qquad s_{\pm}(t) := a(t) e^{\pm iN\varphi(t)}.
\]
The $s_-$ term gives a contribution $\mathcal{O}(N^{-M})$ for every $M$ under the assumptions of Theorem~\ref{th:asinto_appendix}. If $a, \varphi$ are analytic in a strip in the complex plane, this contribution is exponentially small in $N$. The theorem can be rewritten under more general hypotheses and the expansion can be continued to higher orders.
\end{remark}

To make the statement of the theorem precise, we introduce in the formulation the Sobolev norms. For an interval $I \subset \mathbb{R}$, the Sobolev space $W^{k,p}(I)$ consists of functions $f$ such 
that $f$ and its weak derivatives up to order $k$ belong to the Lebesgue space $L^p(I)$.
For the case $p=\infty$, which is relevant here, the space $W^{k,\infty}(I)$ consists of functions whose weak derivatives up to order $k$ are essentially bounded. 
For functions that are continuously differentiable up to order $k$ (as assumed in our theorem), this simplifies. The norm $||\cdot||_{W^{k,\infty}(I)}$ measures the maximum absolute value achieved by the function or any of its derivatives up to order $k$.
These norms provide a compact way to quantify the {\it regularity} or {\it smoothness} of the functions $a$ and $\varphi$, which is essential for controlling the remainder term in the asymptotic expansion.

{\bf Proof:} This method provides a systematic way to derive the full asymptotic series to any order in $1/N$, where the regularity required for the functions $a(t)$ and $\varphi(t)$ increases with the order of the expansion. 
For a complete foundation of the theory, we refer the reader to the seminal work by Hörmander \cite{Hormander1985}. 
For modern treatments focused on the semiclassical perspective, see the books by Martinez \cite{Martinez2002} or Zworski \cite{Zworski2012}.
\qed

\section{Existence of Real Solutions for the Berry's Chord Equation}
\label{app:cordareale}

The nature of the solutions to the Berry's chord equation determines the local behavior of the WVD. 
A real solution $\tau_s$ corresponds to the {\it classically allowed} oscillatory region. 
While a global analysis depends on the overall structure of the instantaneous frequency $f_i(t)$, a 
\textit{local analysis} is sufficient to justify the uniform asymptotic framework. 
We prove here that for any point $t$ where $f_i(t)$ has non-zero curvature, there exists a frequency neighborhood around $f_i(t)$ where a real solution is guaranteed to exist.

\begin{theorem}[Local Existence of Real Solutions]
Let $f_i(t)$ be at least twice continuously differentiable at a time $t$. Let the midpoint function be $M(\tau_s) := \frac{1}{2}\left[f_i(t + \tau_s/2) + f_i(t - \tau_s/2)\right]$.
\begin{enumerate}
    \item If $f_i(t)$ is locally convex (i.e., $f_i''(t) > 0$), then there exists an $\epsilon > 0$ such that for any frequency $f \in [f_i(t), f_i(t) + \epsilon)$, the equation $M(\tau_s) = f$ has a real solution for $\tau_s$.
    \item If $f_i(t)$ is locally concave (i.e., $f_i''(t) < 0$), then there exists an $\epsilon > 0$ such that for any frequency $f \in (f_i(t) - \epsilon, f_i(t)]$, the equation $M(\tau_s) = f$ has a real solution for $\tau_s$.
\end{enumerate}
\label{th:cordareale}
\end{theorem}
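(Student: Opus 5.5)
The plan is to study the midpoint function $M(\tau_s)$ near $\tau_s=0$ and apply the intermediate value theorem. First, $M(0)=f_i(t)$ and $M$ is even in $\tau_s$. Assuming $f_i$ is $C^2$ in a neighbourhood of $t$, which is what "twice continuously differentiable at $t$" must mean for the argument to make sense, $M$ is $C^2$ near $0$. Its derivatives are
\[
M'(\tau_s)=\tfrac14\left[f_i'(t+\tau_s/2)-f_i'(t-\tau_s/2)\right],\qquad M''(\tau_s)=\tfrac18\left[f_i''(t+\tau_s/2)+f_i''(t-\tau_s/2)\right].
\]
Hence $M'(0)=0$ and $M''(0)=\tfrac14 f_i''(t)$.

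Consider the convex case $f_i''(t)>0$. By continuity of $f_i''$ there is $\delta>0$ such that $f_i''>0$ on $[t-\delta/2,t+\delta/2]$. Then $M''>0$ on $[-\delta,\delta]$. Since $M'(0)=0$, $M'$ is strictly positive on $(0,\delta]$, so $M$ is strictly increasing on $[0,\delta]$. Set $\epsilon:=M(\delta)-f_i(t)$, which is positive.

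For any $f\in[f_i(t),f_i(t)+\epsilon)$, $M$ is continuous on $[0,\delta]$ with $M(0)=f_i(t)\le f<M(\delta)$. The intermediate value theorem therefore gives $\tau_s\in[0,\delta)$ with $M(\tau_s)=f$. By evenness, $-\tau_s$ is also a solution, which is the symmetric pair $\pm\tau_s$ used in Sec.~\ref{sec:uniform}. Monotonicity shows the solution in $[0,\delta)$ is unique, which is worth noting although the statement does not require it.

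The concave case follows by applying the convex argument to $-f_i$. Alternatively, repeat it with $M$ strictly decreasing on $[0,\delta]$, set $\epsilon:=f_i(t)-M(\delta)$, and cover $f\in(f_i(t)-\epsilon,f_i(t)]$.

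The only delicate point is the hypothesis. Pointwise twice-differentiability at $t$ alone would not give a neighbourhood where $f_i''$ keeps its sign. I would handle it by reading the hypothesis as $C^2$ near $t$, consistent with the real-analyticity assumed in Sec.~\ref{sec:uniform}. If only pointwise differentiability is available, I would instead use the second-order Taylor expansion $M(\tau_s)=f_i(t)+\tfrac18 f_i''(t)\tau_s^2+o(\tau_s^2)$. This gives $M(\tau_s)>f_i(t)$ for small $\tau_s\neq0$, and with continuity of $M$ the intermediate value theorem still yields existence on $[0,\delta]$, though without uniqueness.
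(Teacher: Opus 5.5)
Your proof is correct, and it differs from the paper's at the decisive step.

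The paper expands $M(\tau_s)=f_i(t)+\tfrac18 f_i''(t)\tau_s^2+\mathcal{O}(\tau_s^4)$ and solves the truncated equation $\tau_s^2\approx 8(f-f_i(t))/f_i''(t)$. It reads off the sign condition from $\tau_s^2\ge 0$. It then simply asserts that, because the approximation is valid near $f_i(t)$, the exact equation has a real root. That last inference is never justified. In addition, the quoted $\mathcal{O}(\tau_s^3)$ and $\mathcal{O}(\tau_s^4)$ remainders tacitly need more regularity than the stated $C^2$.

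You instead use the sign of $M''$ on a neighbourhood, strict monotonicity of $M$ on $[0,\delta]$, and the intermediate value theorem. This proves existence for the exact equation under precisely the $C^2$ hypothesis. Your Peano-remainder fallback extends it to pointwise twice-differentiability at $t$. It also gives you, at no extra cost:
\begin{itemize}
\item uniqueness in $[0,\delta)$;
\item the symmetric pair $\pm\tau_s$;
\item the local converse: $M$ has a strict extremum at $0$ on $[-\delta,\delta]$, so no short real chord exists on the other side of the IF curve. This is the ``forbidden region'' dichotomy the paper's $\tau_s^2\ge 0$ argument is really after.
\end{itemize}

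What the paper's route buys is quantitative information: the leading-order chord length $\tau_s\sim\sqrt{8(f-f_i(t))/f_i''(t)}$. The short-chord analysis and the computation of $\partial\xi/\partial f$ rely on this. You can recover it from your result by combining existence with the expansion $M(\tau_s)-f_i(t)=\tfrac18 f_i''(t)\tau_s^2(1+o(1))$, since your monotonicity argument shows $\tau_s\to 0$ as $f\to f_i(t)$.
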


\begin{proof}
The proof relies on a Taylor expansion of the midpoint function $M(\tau_s)$ for small values of $\tau_s$. We expand $f_i(t \pm \tau_s/2)$ around $t$:
\begin{align*}
    f_i\left(t + \frac{\tau_s}{2}\right) &= f_i(t) + f_i'(t)\frac{\tau_s}{2} + \frac{f_i''(t)}{2}\left(\frac{\tau_s}{2}\right)^2 + \mathcal{O}(\tau_s^3) \\
    f_i\left(t - \frac{\tau_s}{2}\right) &= f_i(t) - f_i'(t)\frac{\tau_s}{2} + \frac{f_i''(t)}{2}\left(\frac{\tau_s}{2}\right)^2 + \mathcal{O}(\tau_s^3)
\end{align*}
Substituting these into the definition of $M(\tau_s)$, the odd-powered terms in $\tau_s$ cancel out:
\[
    M(\tau_s) = \frac{1}{2} \left[ 2f_i(t) + 2\frac{f_i''(t)}{2}\frac{\tau_s^2}{4} + \mathcal{O}(\tau_s^4) \right] = f_i(t) + \frac{f_i''(t)}{8}\tau_s^2 + \mathcal{O}(\tau_s^4).
\]
For a frequency $f$ close to $f_i(t)$, we seek a small $\tau_s$ that solves $M(\tau_s) = f$. Using our approximation:
\[
    f \approx f_i(t) + \frac{f_i''(t)}{8}\tau_s^2 \implies \tau_s^2 \approx \frac{8(f - f_i(t))}{f_i''(t)}.
\]
For $\tau_s$ to be a real number, we require $\tau_s^2 \geq 0$. This leads to two cases:
\begin{itemize}
    \item \textbf{Case 1: Local Convexity ($f_i''(t) > 0$)}. For $\tau_s^2$ to be non-negative, the numerator must also be non-negative. This requires $f - f_i(t) \geq 0$, or $f \geq f_i(t)$.
    \item \textbf{Case 2: Local Concavity ($f_i''(t) < 0$)}. For $\tau_s^2$ to be non-negative, the numerator must be non-positive to cancel the negative sign of the denominator. This requires $f - f_i(t) \leq 0$, or $f \leq f_i(t)$.
\end{itemize}
Since this approximation is valid for $f$ sufficiently close to $f_i(t)$, the existence of a real solution $\tau_s$ is guaranteed in a local neighborhood on the side of the IF curve determined by its curvature. This completes the proof.
\end{proof}

\paragraph{Global Limitations.}
It is crucial to recognize that this local guarantee does not necessarily extend globally. The midpoint function $M(\tau_s)$ may be bounded, even if $f_i(t)$ is defined for all time. For instance, if $f_i(t)$ is defined on a finite time interval, the valid range for $\tau_s$ will also be finite, implying that $M(\tau_s)$ has a global maximum or minimum.
Consequently, there may exist a frequency $f_{bound}$ beyond which no real solution can be found, even if $f$ is in the correct {\it classically allowed} region. For example, in the convex case, the range of frequencies admitting real solutions is not necessarily $[f_i(t), \infty)$, but rather $[f_i(t), \max_{\tau_s} M(\tau_s)]$.
However, for the purpose of the uniform asymptotic analysis, which describes the WVD's structure \textit{near} the IF curve, the local existence theorem is the essential and sufficient condition.

\section{Bleistein Asymptotic Expansion}
\label{app:bleistein}

This appendix reports on the material present in the book by Borovikov \cite{borovikov1994}, p. 72. Consider an integral
\begin{equation}
I(\lambda, a, b) = \int_{b}^{\infty} f(x)g(x) \exp[i\lambda\varphi(x, a)] \,dx
\label{eq:integral}
\end{equation}
where for $a > 0$, the phase function $\varphi(x, a)$ has two stationary points, $x_1$ and $x_2$, close to the endpoint of the integration interval $x=b$.

The expansion is expressed in terms of the incomplete Airy function, defined by the integral:
\begin{equation}
\text{Ai}(p, q) = \frac{1}{2\pi} \int_{q}^{\infty} \exp\left\{i\left[\frac{t^3}{3} + tp\right]\right\} \,dt
\label{eq:incomplete_airy}
\end{equation}

To reduce the general integral \eqref{eq:integral} to the canonical model, we assume the function $\varphi(x, a)$ to be analytical for small $a$ and $x - x_0$. The following relations are required to hold at $x = x_0$, $a = 0$:
\begin{equation}
\label{eq:2.34}
\varphi'''_{xxx} \neq 0, \quad \varphi'_{x} = \varphi''_{xx} = 0, \quad \varphi''_{xa} \neq 0
\end{equation}
Under these conditions, there exists a change of variable $x = x(\tau)$ that is analytical and reversible for small $a$ and $x - x_0$, depending parametrically on $a$, which transforms $\varphi(x, a)$ into a cubic polynomial:
\begin{equation}
\label{eq:2.35}
\varphi(x, a) = \varphi_0(a) + \tau^3/3 - \xi(a)\tau
\end{equation}
where $\varphi_0(a)$ and $\xi(a)$ are analytical functions of $a$. The proof of this statement can be found in \cite{chester1957, levinson1960canonical}.

\begin{theorem}
Let $I(\lambda, a, b)$ be the integral from eq. (\ref{eq:integral}) in which for $a > 0$ and $a \to 0$ the phase function $\varphi(x, a)$
 has two merging stationary points $x_1$ and $x_2$, and the function $|\varphi''|$ in a vicinity of these points is bounded from below. 
 Let $g(x)$ be a cutting function and $f(x)$ be a function analytical in the region $\text{supp}(g)$. Then as $\lambda \to \infty$, 
 the asymptotic of $I$, uniform over the distance between the points $x_1$ and $x_2$ and the endpoint $b$ of the integration interval, is expressed in terms of the incomplete Airy function and its derivative as
\begin{equation}
\begin{split}
I(\lambda, a, b) = \frac{2\pi}{\lambda^{1/3}} \exp[i\lambda\varphi_0(a)] \bigg\{ & \mathcal{A} \Phi(-\lambda^{2/3}\xi, \lambda^{1/3}q) \\
& - \frac{i \mathcal{B}}{\lambda^{1/3}} \text{sign}(\varphi'') \Phi'_p(-\lambda^{2/3}\xi, \lambda^{1/3}q) \\
& + 2\pi i \frac{1}{\lambda^{2/3}} \text{sign}(\varphi'') \mathcal{C} \exp[i\lambda\varphi(b, a)] \bigg\}
\end{split}
\end{equation}
\label{th:uniforme}
\end{theorem}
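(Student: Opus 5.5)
The plan follows the Chester--Friedman--Ursell (CFU) route.

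\textbf{Step 1: cubic change of variables.} By the hypotheses \eqref{eq:2.34}, the result of \cite{chester1957, levinson1960canonical} gives an analytic, invertible map $x=x(\tau,a)$ with $\varphi(x,a)=\varphi_0(a)+\tau^3/3-\xi(a)\tau$. The stationary points $x_{1,2}$ correspond to $\tau=\pm\xi^{1/2}$. Matching the two critical values fixes $\varphi_0$ and $\xi$: $\varphi_0$ is the mean of $\varphi(x_1,a)$ and $\varphi(x_2,a)$, and $\tfrac{4}{3}\xi^{3/2}$ is their difference. We then set $q:=\tau(b,a)$ and write
\[
I=\int_{q}^{\infty} G(\tau)\,e^{i\lambda(\varphi_0+\tau^3/3-\xi\tau)}\,d\tau,\qquad G(\tau)=f(x(\tau))\,g(x(\tau))\,\frac{dx}{d\tau}.
\]
The cutting function $g$ makes the upper limit harmless: outside the region where the map is valid, the integrand is $C^\infty$ with no stationary points, so that part contributes $O(\lambda^{-\infty})$.

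\textbf{Step 2: Bleistein integration by parts.} We decompose
\[
G(\tau)=\mathcal{A}_0+\mathcal{B}_0\tau+(\tau^2-\xi)H_0(\tau),
\]
choosing $\mathcal{A}_0$ and $\mathcal{B}_0$ so that the linear part interpolates $G$ at $\tau=\pm\xi^{1/2}$. This keeps $H_0$ analytic uniformly as $\xi\to0$. The two polynomial pieces give the incomplete Airy terms directly:
\begin{itemize}
\item substituting $\tau\to\lambda^{-1/3}\tau$ turns the $\mathcal{A}_0$ term into $\lambda^{-1/3}2\pi\,\Phi(-\lambda^{2/3}\xi,\lambda^{1/3}q)$;
\item the $\tau$ factor in the $\mathcal{B}_0$ term becomes $-i\lambda^{-1/3}\partial_p$ of that function, which produces the $\Phi'_p$ term.
\end{itemize}
The factor $\operatorname{sign}(\varphi'')$ accounts for the orientation of the map, i.e.\ whether $\tau^3$ appears with $+$ or with $-$ (complex conjugation). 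For the remainder term we use $(\tau^2-\xi)e^{i\lambda(\cdot)}=(i\lambda)^{-1}\,\tfrac{d}{d\tau}e^{i\lambda(\cdot)}$ and integrate by parts. This yields:
\begin{itemize}
\item a boundary term at $\tau=q$, namely $-(i\lambda)^{-1}H_0(q)e^{i\lambda\varphi(b,a)}$, which is the $\mathcal{C}$ term once the prefactors are collected;
\item a new integral of $H_0'$ of the same form, to which we apply the decomposition again.
\end{itemize}
Iterating produces the series $\mathcal{A}=\sum\lambda^{-n}\mathcal{A}_n$, $\mathcal{B}=\sum\lambda^{-n}\mathcal{B}_n$ and $\mathcal{C}=\sum\lambda^{-n}\mathcal{C}_n$.

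\textbf{Step 3: uniform remainder.} The remainder after $N$ steps is an integral of $H_N'$ against the cubic exponential over $[q,\infty)$. We bound it uniformly in $\xi$ and $q$ using two facts:
\begin{itemize}
\item analyticity of $G$ lets us write $H_n$ through Cauchy integrals over a fixed contour enclosing $\pm\xi^{1/2}$, giving bounds independent of $\xi$;
\item the integral $\int_q^\infty h(\tau)e^{i\lambda(\tau^3/3-\xi\tau)}d\tau$ with smooth, compactly cut $h$ is $O(\lambda^{-1/3})$ uniformly, which follows from van der Corput's lemma since the third derivative of the phase is $2$.
\end{itemize}
The resulting error is of the order of the first neglected term.

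\textbf{Main obstacle.} I expect the main difficulty to be the uniformity of Steps 1 and 3 in $a$, $\xi$ and $q$ simultaneously. This includes regimes where the stationary points are complex ($\xi<0$), where the CFU map needs careful branch choices, and where the endpoint $b$ sits near or between $x_1$ and $x_2$. My first attempt would be to work with the map in a fixed complex neighbourhood, use Cauchy estimates to get analytic dependence on $\xi$, and cite Borovikov \cite{borovikov1994} for the branch selection at $\xi\to0$.
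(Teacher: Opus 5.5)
Your proposal is correct and follows essentially the argument the paper relies on. The paper takes this theorem from Borovikov, uses the Chester--Friedman--Ursell/Levinson cubic map exactly as in your Step~1, and its own remainder appendix repeats your Step~2 in the PWVD setting: the decomposition $G=A_0+B_0u+(u^2-\xi)H$, integration by parts via $(u^2-\xi)e^{i\chi(\cdot)}=(i\chi)^{-1}\frac{d}{du}e^{i\chi(\cdot)}$, and a resulting endpoint term plus an $H'$ integral (the paper stops after one step rather than iterating with a van der Corput bound).

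Your endpoint term $-(i\lambda)^{-1}H_0(q)e^{i\lambda\varphi(b,a)}$ also exposes a slip in the transcribed statement. Inside braces already multiplied by $e^{i\lambda\varphi_0(a)}$, the $\mathcal{C}$ term's exponential should be $e^{i\lambda(\varphi(b,a)-\varphi_0(a))}$. So your ``collecting the prefactors'' actually corrects the displayed formula rather than reproducing it literally.
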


Here, $\Phi(p, q)$ for $\varphi''>0$ coincides with the incomplete Airy function, $\text{Ai}(p,q)$, while for $\varphi''<0$ it coincides with its complex conjugate:
\begin{equation}
\Phi(p, q) = 
\begin{cases} 
\text{Ai}(p, q) & \text{for } \varphi'' > 0 \\
\text{Ai}^*(p, q) = \text{Ai}(p) - \text{Ai}(p, -q) & \text{for } \varphi'' < 0 
\end{cases}
\end{equation}
The functions $\varphi_0(a)$, $\xi = \xi(a)$, $q=q(a,b)$ and the coefficients of the asymptotic series $\mathcal{A} = \sum_{n=0}^{\infty} (i/\lambda)^n A_n$, $\mathcal{B} = \sum_{n=0}^{\infty} (i/\lambda)^n B_n$, $\mathcal{C} = \sum_{n=0}^{\infty} (i/\lambda)^n C_n$ are regular functions of the parameters $a, b$.

The functions $\varphi_0(a)$ and $\xi(a)$ are defined in terms of the original phase function at the stationary points $x_1$ and $x_2$:
\begin{gather}
\label{eq:phi0_xi_def}
\varphi_0(a) = \frac{\varphi(x_1, a) + \varphi(x_2, a)}{2} \\
\xi(a) = \left\{ \frac{3}{4} \left[ \varphi(x_1, a) - \varphi(x_2, a) \right] \right\}^{2/3} \nonumber
\end{gather}

The leading order coefficients, $A_0$ and $B_0$, of the asymptotic series are given by:
\begin{gather}
\label{eq:A0_B0_def}
A_0 = 2^{-1/2} \xi^{1/4} \left[ \frac{f(x_2)}{\sqrt{\varphi''_{xx}(x_2, a)}} + \frac{f(x_1)}{\sqrt{|\varphi''_{xx}(x_1, a)|}} \right] \\
B_0 = -2^{-1/2} \xi^{-1/4} \left[ \frac{f(x_1)}{\sqrt{|\varphi''_{xx}(x_1, a)|}} - \frac{f(x_2)}{\sqrt{\varphi''_{xx}(x_2, a)}} \right] \nonumber
\end{gather}

The function $q(a, b) = \tau(b)$, where $\tau(x)$ is the analytical and reversible variable change that transforms the phase function into a cubic polynomial eq. (\ref{eq:2.35}). For $a<0$, the function $q(a,b)$ is the unique root of the equation
\begin{equation}
\varphi(b, a) = \varphi_0(a) + \frac{q^3}{3} - \xi(a)q
\end{equation}
while for $a>0$ it can be defined as an analytical continuation from the region $a<0$.

Finally, the leading term $C_0$ of the asymptotic series $\mathcal{C}$ has the form
\begin{equation}
C_0 = \frac{f(b)}{\varphi'_b(a, b)} - \frac{A_0 - B_0 q}{q^2 - \xi}
\end{equation}

While Theorem~\ref{th:uniforme} addresses the case where the coalescing points are near a single integration endpoint, 
our work considers the scenario of a narrow and symmetric integration window. 
In this configuration, the stationary points are proximate to both endpoints, and the uniform expansion is generalized by
 employing symmetric incomplete Airy functions. 

\section{Uniform Asymptotic Expansion and Remainder Estimate}
\label{app:remainder_bound}

We present the uniform asymptotic expansion for the windowed PWVD into the following theorem, 
which adapts the general uniform stationary phase method \cite{borovikov1994, Bleistein1966} to the specific topological structure of the windowed PWVD.

\begin{theorem}[ Uniform Asymptotic Expansion of the PWVD]
\label{th:pwvd_uniform}
Let the PWVD integral be defined as in Eq. (\ref{eq:canonical}) with a large asymptotic parameter $\chi \gg 1$. Assume the following conditions hold:
\begin{enumerate}
    \item \textbf{Analyticity of the Signal:} The signal amplitude $a(t)$ and phase $\psi(t)$ are real-analytic functions on the analysis interval (as justified in Sec. \ref{sec:phase_analyticity}).
    \item \textbf{Tapered Window:} The window function $w(\tau)$ is real-analytic on the open interval $[-T/2, T/2]$ 
    and vanishes continuously at the boundaries of its support: $w(\pm T/2) = 0$. 
    (We note that standard tapered windows used in signal processing naturally satisfy this analyticity condition).
    \item \textbf{Non-degeneracy:} The instantaneous frequency curve has non-zero curvature at the analysis time $t$, i.e., $f_i''(t) \neq 0$.
\end{enumerate}
Then, the canonical mapping to the cubic phase exists as an analytic diffeomorphism, and the PWVD admits the following uniform asymptotic expansion:
\begin{equation}
PW_x(t, \bar{f}) = \left(\frac{2\pi}{\chi^{1/3}}\right) \left[ A_0 \Phi_{inc}(-\chi^{2/3}\xi, \chi^{1/3}q) \right] + R(\chi)
\label{eq:refined_uniform}
\end{equation}
Furthermore, the remainder term $R(\chi)$ is uniformly bounded by:
\begin{equation}
|R(\chi)| \le C(t, \bar{f}) \chi^{-1} = \order{\chi^{-1}}
\end{equation}
where $C(t, \bar{f})$ is a finite constant independent of $\chi$. 
Furthermore, the expansion is strictly uniform (i.e., $C(t, \bar{f})$ is bounded by a global constant $C_{max}$) 
over any compact region in the $(t, f)$ plane provided that two conditions are met: 
the instantaneous frequency curvature is strictly bounded away from zero ($|f_i''(t)| \ge c > 0$), 
and the stationary points remain strictly inside the window support ($\xi < q^2$).
\end{theorem}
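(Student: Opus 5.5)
The proof will follow the Chester--Friedman--Ursell route. First we construct the cubic mapping, then we integrate exactly in the canonical variable, and finally we estimate the remainder by a single integration by parts in which the tapered window removes the endpoint terms.

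\emph{Step 1 (canonical mapping).} By hypotheses 1 and 3, $F(\tau,t,\bar f)$ is analytic in $\tau$ near $\tau=0$ and odd in $\tau$. We verified in Sec.~\ref{sec:uniform} that $\partial^3_\tau F|_{0}=\psi'''(t)/4\neq0$ and $\partial^2_{\tau f}F=-2\pi/\chi\neq0$. These are exactly conditions (\ref{eq:2.34}), so the Chester--Friedman--Ursell/Levinson theorem gives an analytic, invertible change of variable $\tau=\tau(u)$ with $F=u^3/3-\xi u$. Oddness of $F$ forces $\varphi_0=0$, and we may choose $\tau(u)$ odd. The map sends $\pm\tau_s\mapsto\pm\sqrt{\xi}$ and $\pm T\mapsto\pm q$.

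To extend the map from a neighbourhood of the caustic to the whole window $[-T,T]$, we argue as follows. Since $\partial_\tau F$ has no zeros other than $\pm\tau_s$ on the window, and $u^2-\xi$ vanishes only at $\pm\sqrt\xi$, the relation $u'(\tau)=\partial_\tau F/(u^2-\xi)$ stays analytic and positive along $[-T,T]$. Under the condition $\xi<q^2$ the extension is therefore a diffeomorphism.

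\emph{Step 2 (exact reduction).} After the substitution, $PW_x=\int_{-q}^{q}G(u)e^{i\chi(u^3/3-\xi u)}\,du$ with $G(u)=A(\tau(u),t)\,\tau'(u)$. Here $G$ is even, and $G(\pm q)=0$ because $w(\pm T/2)=0$. We write
\[
G(u)=A_0+B_0u+(u^2-\xi)H(u),
\]
where $A_0,B_0$ interpolate $G$ at $u=\pm\sqrt\xi$. Evenness gives $B_0=0$, and $A_0=G(\sqrt\xi)$. Computing $\tau'(\sqrt\xi)$ by l'H\^opital from $F_{\tau\tau}(\tau_s)\tau'^2=2u$ reproduces the stated $A_0=\xi^{1/4}A(\tau_s,t)|2/F_{\tau\tau}|^{1/2}$.

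Rescaling $u=\chi^{-1/3}s$, the $A_0$ term equals exactly $(2\pi/\chi^{1/3})A_0\Phi_{inc}(-\chi^{2/3}\xi,\chi^{1/3}q)$.

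\emph{Step 3 (remainder).} The remainder is $R=\int_{-q}^{q}(u^2-\xi)H\,e^{i\chi(\cdot)}du$. Since $(u^2-\xi)e^{i\chi(\cdot)}=(i\chi)^{-1}\frac{d}{du}e^{i\chi(\cdot)}$, one integration by parts gives
\[
R=(i\chi)^{-1}\Bigl[He^{i\chi(\cdot)}\Bigr]_{-q}^{q}-(i\chi)^{-1}\int_{-q}^{q}H'e^{i\chi(\cdot)}du .
\]
The boundary term is controlled by $H(\pm q)=(G(\pm q)-A_0)/(q^2-\xi)=-A_0/(q^2-\xi)$. This is finite when $\xi<q^2$, and it decays as $q$ grows, which gives the claimed $T$-dependence. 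The integral term is bounded by $\chi^{-1}\int|H'|$, giving $|R|\le C\chi^{-1}$. One can refine this further: the integral equals $\chi^{-1}$ times another Airy-type integral, which is $\order{\chi^{-4/3}}$ and matches the intrinsic error discussed in the paper.

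\emph{Uniformity.} The constants $C(t,\bar f)$ depend only on $\sup|H|$ and $\sup|H'|$ on $[-q,q]$. These are continuous in $(t,f)$, because the mapping and the divided difference $H$ depend analytically on the parameters. Provided $|f_i''|\ge c$ (which keeps $\tau'$ bounded and nonzero) and $q^2-\xi$ is bounded below, taking the supremum over a compact set gives $C_{max}$.

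The main obstacle is Step 1's globalization together with the uniformity of $H$ near $u=\pm\sqrt\xi$ as $\xi\to0$ and through complex $\xi$ (the evanescent side). For this I would first use the CFU local construction in a complex neighbourhood and then apply Cauchy estimates to the divided difference. If the neighbourhood does not cover the whole window, the fallback is a smooth partition of unity: near the caustic we apply the above argument, and away from the stationary points a standard nonstationary-phase integration by parts contributes $\order{\chi^{-1}}$, with no endpoint terms thanks to the vanishing window.
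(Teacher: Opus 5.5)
Your proposal is correct and follows essentially the same route as the paper's appendix proof: the Levinson/CFU cubic mapping, evenness of $G$ forcing $B_0=0$, a single integration by parts in which $w(\pm T/2)=0$ gives $H(\pm q)=-A_0/(q^2-\xi)$, and the l'H\^opital evaluation of $A_0$. Your globalization and uniformity remarks go beyond the paper, which simply cites Levinson's local theorem; note, though, that extending the map to all of $[-T,T]$ requires $\partial_\tau F$ to have no zeros on the window other than $\pm\tau_s$ (for instance, $f_i''$ of one sign on $[t-T/2,t+T/2]$), and neither you nor the paper derive this from the stated hypotheses.
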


\noindent\textbf{Remark:} Above \(\xi = \left[ \frac{3}{2} F(\tau_s, t, \bar{f}) \right]^{2/3}\) and \(q\) is defined by \(F(T, t, \bar{f}) = \frac{q^3}{3} - \xi q\).
Moreover, \(\xi\) does \emph{not} depend on the asymptotic parameter \(\chi\); it is defined solely from the phase \(F\). 
The parameter \(\chi\) enters only through the scaled arguments \(\chi^{2/3}\xi\) and \(\chi^{1/3}q\).

The proof of the remainder bound relies on the analyticity of the mapping and the vanishing of the amplitude at the endpoints. We provide the explicit derivation below.

\paragraph{Canonical mapping and transformed amplitude}
The uniform asymptotic expansion relies on the canonical mapping $\tau \mapsto u(\tau)$, defined by the phase relation $F(\tau, t, \bar{f}) = \frac{u^3}{3} - \xi u$. The transformed amplitude $G(u)$ is the product of the original PWVD amplitude and the Jacobian of the transformation:
\beq
G(u) = \underbrace{\left[ w\left(\frac{\tau(u)}{2}\right) w\left(-\frac{\tau(u)}{2}\right) a\left(t + \frac{\tau(u)}{2}\right) a\left(t - \frac{\tau(u)}{2}\right) \right]}_{A(\tau(u), t)} \cdot \underbrace{\left[ \frac{u^2 - \xi}{F'(\tau(u), t, \bar{f})} \right]}_{\frac{d\tau}{du}}
\label{eq:asiref1}
\eeq
Because the signal phase $\psi(t)$ is analytic (Assumption 1 of Theorem\,\, \ref{th:pwvd_uniform}), $F(\tau)$ is analytic. By Levinson's theorem \cite{levinson1960canonical}, the mapping $\tau(u)$ is an analytic diffeomorphism. The apparent singularity in the Jacobian at the stationary points (where $F'(\tau)=0$ and $u^2=\xi$) is strictly removable. Thus, $G(u)$ is analytic.

\paragraph{Symmetry and simplification of the Bleistein decomposition}
A fundamental property of the PWVD integral is its symmetry. The phase $F(\tau)$ is an odd function of $\tau$, which implies the mapping $u(\tau)$ is also odd. 
The amplitude $A(\tau,t)$ in Eq.\, \ref{eq:asiref1} is an even function of $\tau$, and the derivative $d\tau/du$ is even. Consequently, the transformed amplitude $G(u)$ is strictly an \emph{even} function: $G(-u) = G(u)$.

The standard Bleistein decomposition is $G(u) = A_0 + B_0 u + (u^2 - \xi)H(u)$. Evaluating this at the stationary points $u = \pm \sqrt{\xi}$ yields:
\beq
G(\sqrt{\xi}) = A_0 + B_0 \sqrt{\xi}, \quad G(-\sqrt{\xi}) = A_0 - B_0 \sqrt{\xi}
\eeq
Due to the even parity of $G(u)$, $G(\sqrt{\xi}) = G(-\sqrt{\xi})$, which strictly enforces $B_0 = 0$. This symmetry drastically simplifies the residual function $H(u)$ to:
\beq
H(u) = \frac{G(u) - A_0}{u^2 - \xi}
\eeq
Since $A_0 = G(\pm\sqrt{\xi})$, the numerator vanishes exactly at the roots of the denominator. 
Crucially, this regularity holds even at the exact coalescence point ($\xi=0$), where the denominator becomes $u^2$. 
In this limit, the even parity of $G(u)$ ensures that its first derivative vanishes at the origin ($G'(0)=0$). 
This provides a zero of order two in the numerator, perfectly matching the singularity of the denominator. 
By Riemann's theorem on removable singularities, all apparent singularities are strictly removable, making $H(u)$ globally analytic within the entire integration domain.

\paragraph{Evaluation of the remainder}
The exact remainder of the expansion is given by the integral of the $H(u)$ term over the transformed finite domain $[-q, q]$, where $\pm q$ are the images
of the original boundaries $\pm T$:
\beq
R(\chi) = \int_{-q}^{q} H(u) (u^2 - \xi) e^{i\chi(\frac{u^3}{3} - \xi u)} du
\eeq
Recognizing that $(u^2 - \xi) e^{i\chi(\frac{u^3}{3} - \xi u)} = \frac{1}{i\chi} \frac{d}{du} \left( e^{i\chi(\frac{u^3}{3} - \xi u)} \right)$, we apply integration by parts:
\beq
R(\chi) = \frac{1}{i\chi} \left[ H(u) e^{i\chi(\frac{u^3}{3} - \xi u)} \right]_{-q}^{q} - \frac{1}{i\chi} \int_{-q}^{q} H'(u) e^{i\chi(\frac{u^3}{3} - \xi u)} du
\eeq
We can explicitly evaluate the boundary term. Since $G(u)$ is an even function, $H(u)$ is also strictly even, meaning $H(-q) = H(q)$. Furthermore, the phase function is odd. This symmetry allows us to combine the complex exponentials into a sine function:
\beq
R(\chi) = \frac{2 H(q)}{\chi} \sin\left[\chi\left(\frac{q^3}{3} - \xi q\right)\right] - \frac{1}{i\chi} \int_{-q}^{q} H'(u) e^{i\chi(\frac{u^3}{3} - \xi u)} du
\label{eq:reminder}
\eeq
Let us underline that the explicit endpoint contribution at order \(O(\chi^{-1})\) is controlled by the boundary value \(H(q)\) in Eq. (\ref{eq:reminder}). 
The second term in Eq. (\ref{eq:reminder}) involving  \(H'(u)\) should instead be understood as the residual oscillatory integral that generates 
higher-order corrections upon further integrations by parts (following the general methods \cite{ Bleistein1966}). 

Taking the absolute value, and using the triangle inequality along with the fact that $|\sin(\cdot)| \le 1$, we obtain the strict bound:
\beq
|R(\chi)| \le \frac{1}{\chi} \left( 2|H(q)| + \int_{-q}^{q} |H'(u)| du \right)
\label{eq:remainder_bound_ineq}
\eeq
Crucially, because the window function $w(\tau)$ vanishes at the boundaries $\tau = \pm T/2$ (Assumption 2 of Thm. \ref{th:pwvd_uniform}), the transformed amplitude vanishes at the endpoints: $G(\pm q) = 0$. 
Therefore, the boundary evaluation of the residual function simplifies to a constant independent of $\chi$: 
\beq
H(q) = \frac{-A_0}{q^2 - \xi}.
\eeq
Provided the stationary points remain strictly inside the window (i.e., $\xi < q^2$, meaning the caustics do not hit the boundary of the support), this boundary term is finite and well-defined.
This explicit form of $H(q)$ reveals a crucial property regarding the window size $T$. 

The parameter $q$ is defined by the mapping evaluated at the boundary: $F(T, t, \bar{f}) = q^3/3 - \xi q$.
 As the window size $T$ increases, the transformed boundary $q$ increases accordingly. 
 Consequently, the magnitude of the boundary term $H(q)$ decays as $\mathcal{O}(q^{-2})$. 

This shows that the explicit endpoint contribution at order $\order{\chi^{-1}}$ is a finite-window effect.
In the limit of an infinitely wide window ($T \to \infty \implies q \to \infty$), the boundary term $H(q) \to 0$, and this explicit endpoint contribution vanishes.
The remaining truncation error is then governed by the intrinsic higher-order terms of the uniform Airy expansion, 
whose first neglected contribution is of order $\order{\chi^{-4/3}}$ under the assumptions of Theorem~\ref{th:pwvd_uniform}.

In conclusion, since $H(u)$ is analytic on the compact interval $[-q, q]$, its derivative $H'(u)$ is continuous and bounded. 
The integral $\int_{-q}^{q} |H'(u)| du$ is therefore strictly finite. 
This proves that both the boundary term and the integral term in Eq. (\ref{eq:remainder_bound_ineq}) scale exactly as $\order{\chi^{-1}}$. Therefore, the total remainder is uniformly bounded by $\order{\chi^{-1}}$, which is asymptotically subdominant to the leading $\order{\chi^{-1/3}}$ incomplete Airy term. 
This provides the required bound for the remainder within the stated assumptions of Theorem~\ref{th:pwvd_uniform}.

\paragraph{Calculation of the leading coefficient $A_0$}
The coefficient $A_0$ is determined by evaluating the transformed amplitude $G(u)$ at the stationary points $u = \pm \sqrt{\xi}$.
 At these points, the derivative of the phase $F'(\tau)$ vanishes, leading to an indeterminate form $0/0$ for the Jacobian $d\tau/du$. 
 Applying L'Hôpital's rule (or equivalently, differentiating the mapping relation $F(\tau) = u^3/3 - \xi u$ twice with respect to $u$), we obtain:
\beq
F''(\tau_s) \left(\frac{d\tau}{du}\right)^2 + F'(\tau_s) \frac{d^2\tau}{du^2} = 2u \implies \left(\frac{d\tau}{du}\right)_{u=\pm\sqrt{\xi}} = \left| \frac{2\sqrt{\xi}}{F''(\tau_s, t, \bar{f})} \right|^{1/2}
\eeq
Substituting this Jacobian into Eq. (\ref{eq:asiref1}) evaluated at the stationary points yields the explicit formula for $A_0$:
\beq
A_0 = G(\pm\sqrt{\xi}) = \xi^{1/4} A(\tau_s, t) \left| \frac{2}{F''(\tau_s, t, \bar{f})} \right|^{1/2}
\eeq
which matches the expression provided in the main text.

\section{Adiabatic Filtering Theorem}
\label{app:adiabfilt}
This appendix provides a formulation and proof of the adiabatic approximation for filtering a chirp signal. 
We compute the asymptotic regime using a large parameter $\lambda$ and by meticulously executing a two-stage application of Stationary Phase Method  (SPM). 
The result shows that for a slowly varying chirp, the output of a Linear Time-Invariant (LTI) filter is approximately the input signal's amplitude multiplied by the filter's transfer function evaluated at the instantaneous frequency.

\begin{theorem}{Adiabatic Filtering}
Consider a complex analytic chirp signal $z(t)$ defined as:
\[
z(t) = A(t) e^{i\lambda\psi(t)}
\]
where:
\begin{enumerate}
    \item $\lambda \gg 1$ is a large, dimensionless asymptotic parameter.
    \item The amplitude $A(t) \in C^1(\R)$ is a "slowly-varying" function, meaning $A'(t) = \order{1}$ with respect to $\lambda$.
    \item The phase function $\psi(t) \in C^3(\R)$ is also "slowly-varying," with derivatives $\psi'(t), \psi''(t), \psi'''(t) = \order{1}$.
    \item The phase is non-stationary, i.e., $\psi''(t) \neq 0$ for all $t$ in the domain of interest.
\end{enumerate}

The instantaneous frequency of the signal is given by:
\[
f(t) = \frac{1}{2\pi}\frac{d}{dt}\left(\lambda\psi(t)\right) = \frac{\lambda}{2\pi}\psi'(t)
\]

Let $S$ be a Linear Time-Invariant (LTI) system with a frequency response $H(f) \in C^1(\R)$ that has compact support (or is rapidly decaying).\\
{\bf Claim}
The filtered output $y(t) = (S*z)(t)$ satisfies the approximation:
\[
y(t) = A(t)H(f(t))e^{i\lambda\psi(t)} + R_\lambda(t)
\]
where the residual term $R_\lambda(t)$ is uniformly bounded by:
\[
\|R_\lambda(t)\|_\infty = \order{\lambda^{-1}}
\]

\label{th:adiabfilt}
\end{theorem}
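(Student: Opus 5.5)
Following the announcement at the head of this appendix, the proof will be a two-stage application of the stationary phase method in the frequency domain. First, I will write the output as $y(t)=\int_{-\infty}^{\infty}\hat z(\nu)H(\nu)e^{\mathrm{i}2\pi\nu t}\,d\nu$. Then I will evaluate $\hat z(\nu)=\int A(s)e^{\mathrm{i}(\lambda\psi(s)-2\pi\nu s)}\,ds$ asymptotically.

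It is convenient to rescale the frequency as $\nu=\lambda\bar\nu/2\pi$, so that the phase becomes $\lambda(\psi(s)-\bar\nu s)$ with stationary point $s_\ast(\bar\nu)$ defined by $\psi'(s_\ast)=\bar\nu$. The hypothesis $\psi''\neq0$ makes $s_\ast$ unique and smooth, because $\psi'$ is strictly monotone and hence invertible. The first stationary phase step then gives
\[
\hat z(\nu)=A(s_\ast)\sqrt{\tfrac{2\pi}{\lambda|\psi''(s_\ast)|}}\,e^{\mathrm{i}\lambda(\psi(s_\ast)-\bar\nu s_\ast)\pm\mathrm{i}\pi/4}\bigl(1+\order{\lambda^{-1}}\bigr).
\]
This is a standard Fourier chirp (SPM) representation of the spectrum, and its $\order{\lambda^{-1}}$ relative correction involves $A'$, $\psi'''$ and the lower bound on $|\psi''|$.

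In the second stage I will substitute this expression into the inverse transform. The $\bar\nu$-integral then has phase $\lambda\Theta(\bar\nu)$ with $\Theta(\bar\nu)=\psi(s_\ast)-\bar\nu(s_\ast-t)$. Since $\psi'(s_\ast)=\bar\nu$, the derivative is $\Theta'(\bar\nu)=-(s_\ast-t)$. The stationary point is therefore $s_\ast=t$, that is $\bar\nu=\psi'(t)$, which means $\nu=f(t)$. Moreover $\Theta''=-1/\psi''(s_\ast)$, which is nondegenerate.

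At this point the Jacobian factors should cancel exactly. The first step contributes $(\lambda|\psi''|)^{-1/2}$, the second step contributes $(\lambda/|\psi''|)^{-1/2}$ from $\Theta''$, the change of variables contributes $d\nu=\lambda\,d\bar\nu/2\pi$, and the two $\pm\pi/4$ phases are opposite in sign. Together these leave exactly $A(t)H(f(t))e^{\mathrm{i}\lambda\psi(t)}$, which is the claimed leading term. I would check this normalization carefully, since it is where sign conventions could go wrong.

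The main obstacle is controlling the remainder uniformly in $t$ and showing it is $\order{\lambda^{-1}}$ under the weak hypotheses stated, namely $A\in C^1$, $\psi\in C^3$ and $H\in C^1$. Standard SPM error bounds need more derivatives, and integrating the first-stage error against $H$ requires that error to be integrable in $\nu$.

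To sidestep this I would first try a direct time-domain argument. Writing $y(t)=\int h(t-s)A(s)e^{\mathrm{i}\lambda\psi(s)}ds$, I would Taylor-expand the phase as $\psi(s)=\psi(t)+\psi'(t)(s-t)+\tfrac12\psi''(\sigma)(s-t)^2$, and expand the amplitude as $A(s)=A(t)+\order{|s-t|}$. The linear term then reproduces $A(t)H(f(t))e^{\mathrm{i}\lambda\psi(t)}$ exactly.

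The remaining pieces are integrals of $|h(u)|$ against $|u|$ and $\lambda u^2$. At first sight the quadratic term looks like $\order{\lambda}$ rather than small. Under the physical scaling, however, the filter's time scale is $\order{\lambda^{-1}}$ relative to the slow variable, because $H$ varies on the frequency scale $f\sim\lambda$. Making this scaling explicit, by writing $h(u)=\lambda\tilde h(\lambda u)$ with $\tilde h$ having finite second moment, turns both pieces into $\order{\lambda^{-1}}$ uniformly in $t$.

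Which of the two routes is cleaner depends on how the scaling of $H$ is meant. I would commit to the time-domain bound for the remainder, and use the two-stage SPM computation to explain the structure of the leading term.
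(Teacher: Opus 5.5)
Your proposal is correct. For the leading term it follows the paper's argument: your two stationary-phase steps, the stationary point $s_\ast=t$ (the paper's $\tau_f=t$), and the cancellation of the Jacobians and of the $\pm\pi/4$ phases are Steps 1--5 of the paper, written in the rescaled variable $\bar{\nu}$.

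The real difference is the remainder, and there you take a genuinely different route. The paper works in the frequency domain:
\begin{itemize}
\item it writes $\hat z=\hat z_{\mathrm{approx}}+E_1$ and asserts $\|E_1\|_\infty=\mathcal{O}(\lambda^{-1})$;
\item it bounds the propagated error by $M_H C_1 L_H\lambda^{-1}$, with $L_H=|\mathrm{supp}\,H|$;
\item it adds a second $\mathcal{O}(\lambda^{-1})$ SPM error.
\end{itemize}
You bound the remainder directly in the time domain: $|R_\lambda(t)|\le\|A'\|_\infty\int|h(u)||u|\,du+\tfrac{\lambda}{2}\|A\|_\infty\|\psi''\|_\infty\int|h(u)|u^2\,du$. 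Under $h(u)=\lambda\tilde h(\lambda u)$, i.e.\ $H(f)=\tilde H(f/\lambda)$, this becomes $\lambda^{-1}\bigl(\|A'\|_\infty m_1+\tfrac12\|A\|_\infty\|\psi''\|_\infty m_2\bigr)$, with $m_k=\int|\tilde h(v)||v|^k\,dv$. The bound is explicit, uniform in $t$, and needs neither $\psi''\neq0$ nor $\psi'''$.

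The price is two hypotheses that are not in the statement. One is the frequency scaling of $H$. The other is finiteness of the first and second absolute moments of $\tilde h$, which forces $\tilde H\in C^2$ rather than $C^1$. State both explicitly as hypotheses; neither is an artifact of your method.
\begin{itemize}
\item Without the scaling the claim is false. For a $\lambda$-independent, compactly supported $H$ with $A=1$ and $\psi(t)=t^2/2$, computing $\hat z$ exactly gives $y(t)=\sqrt{2\pi/\lambda}\,e^{\mathrm{i}\pi/4}\bigl(h(t)+\mathcal{O}(\lambda^{-1})\bigr)$. The proposed leading term at $t=0$ is $H(0)$, so $R_\lambda(0)\to-H(0)$.
\item Without the extra regularity the rate fails. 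A compactly supported $C^1$ profile $\tilde H$ with a $|\mu-\mu_1|^{1+\alpha}$ singularity ($0<\alpha<1$) gives, for the same chirp, an error of exact order $\lambda^{-(1+\alpha)/2}$ at $t=2\pi\mu_1$.
\end{itemize}

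The paper's second SPM silently needs the same scaling: the $f$-integral has a large parameter only after setting $f=\lambda\bar f/2\pi$ with $H$ slowly varying in $\bar f$. Its propagated-error step, however, treats $L_H$ as $\lambda$-independent. Under that scaling $L_H\propto\lambda$, and the triangle-inequality bound no longer yields $\mathcal{O}(\lambda^{-1})$.

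The paper's route buys the physical picture: the output at time $t$ is fed by the spectral component at $f(t)$. Your route buys a rigorous, explicit error bound and makes visible the conditions under which the claim actually holds.
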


\begin{proof}
The proof proceeds by applying the Method of Stationary Phase twice.

\subsubsection*{Step 1: Fourier Transform of the Chirp Signal}
The filtered output in the time domain is the inverse Fourier transform of the product of the filter's response and the signal's spectrum, $\hat{z}(f)$. First, we compute $\hat{z}(f)$:
\[
\hat{z}(f) = \int_{-\infty}^{\infty} z(\tau) e^{-i2\pi f \tau} d\tau = \int_{-\infty}^{\infty} A(\tau) e^{i(\lambda\psi(\tau) - 2\pi f \tau)} d\tau
\]
This is an oscillatory integral with a phase $\Phi_f(\tau) = \lambda\psi(\tau) - 2\pi f \tau$. The stationary point $\tau_f$ is found by setting the first derivative of the phase to zero:
\[
\frac{d\Phi_f}{d\tau}(\tau_f) = \lambda\psi'(\tau_f) - 2\pi f = 0 \implies \lambda\psi'(\tau_f) = 2\pi f
\]
This condition implicitly defines the stationary time point $\tau_f$ as a function of frequency $f$. The second derivative of the phase is $\Phi_f''(\tau) = \lambda\psi''(\tau)$.

\subsubsection*{Step 2: First Application of the Stationary Phase Method (SPM)}
Applying the SPM formula to the integral for $\hat{z}(f)$, we obtain the leading-order approximation:
\[
\hat{z}(f) \approx A(\tau_f) \sqrt{\frac{2\pi}{|\Phi_f''(\tau_f)|}} e^{i\Phi_f(\tau_f) + i\frac{\pi}{4}\sigma_f} = A(\tau_f) \sqrt{\frac{2\pi}{\lambda|\psi''(\tau_f)|}} e^{i(\lambda\psi(\tau_f) - 2\pi f \tau_f) + i\frac{\pi}{4}\sigma_f}
\]
where $\sigma_f = \text{sign}(\psi''(\tau_f))$. The error in this approximation is $\order{\lambda^{-1}}$.

\subsubsection*{Step 3: Filtering in the Frequency Domain}
The output signal $y(t)$ is the inverse Fourier transform of $H(f)\hat{z}(f)$:
\[
y(t) = \int_{-\infty}^{\infty} H(f)\hat{z}(f) e^{\mathrm{i} 2\pi f t} df
\]
Substituting our approximation for $\hat{z}(f)$ yields another oscillatory integral:
\[
y(t) \approx \int_{-\infty}^{\infty} G(f) e^{\mathrm{i} \Psi(f)} df
\]
where the new amplitude $G(f)$ and phase $\Psi(f)$ are:
\begin{align*}
G(f) &= H(f) A(\tau_f) \sqrt{\frac{2\pi}{\lambda|\psi''(\tau_f)|}} e^{i\frac{\pi}{4}\sigma_f} \\
\Psi(f) &= \lambda\psi(\tau_f) - 2\pi f \tau_f + 2\pi f t
\end{align*}

\subsubsection*{Step 4: Second Application of the SPM}
We find the stationary point $f_t$ of the integral for $y(t)$ by differentiating $\Psi(f)$:
\[
\frac{d\Psi}{df} = \underbrace{(\lambda\psi'(\tau_f) - 2\pi f)}_{=0 \text{ by definition of } \tau_f} \frac{d\tau_f}{df} - 2\pi\tau_f + 2\pi t
\]
The stationary condition $\frac{d\Psi}{df} = 0$ thus simplifies to $\tau_f = t$. This implies the stationary point is the instantaneous frequency $f_t = f(t)$. The second derivative is $\Psi''(f(t)) = -\frac{4\pi^2}{\lambda\psi''(t)}$.

\subsubsection*{Step 5: Simplification and Final Result}
Applying the SPM formula to $y(t)$ at the stationary point $f(t)$:
\[
y(t) \approx G(f(t)) \sqrt{\frac{2\pi}{|\Psi''(f(t))|}} e^{i\Psi(f(t)) + i\frac{\pi}{4}\theta}
\]
where $\theta = \text{sign}(\Psi''(f(t)))$. As shown in the detailed derivation, the amplitude terms simplify to $A(t)H(f(t))$, the phase $\Psi(f(t))$ becomes $\lambda\psi(t)$, and the complex phase factors $e^{i\pi/4}$ cancel out, leading to:
\[
y(t) \approx A(t) H(f(t)) e^{\mathrm{i} \lambda\psi(t)}
\]

\subsubsection*{Step 6: Error Analysis}
Each application of the SPM introduces an error of order $\order{\lambda^{-1}}$. 
The total residual $R_\lambda(t)$ is therefore the sum of two terms of order $\order{\lambda^{-1}}$, resulting in an overall error of $\order{\lambda^{-1}}$.
\end{proof}

\paragraph{Additional details on the Error Analysis}
The main proof provides the essential steps for deriving the leading-order term.
 A crucial question is how the error from the first application of SPM affects the final error bound. 
 This appendix provides a more detailed look at this error propagation.

\paragraph{Decomposition of the Signal Spectrum.}
Let us write the Fourier transform $\hat{z}(f)$ from Step 1 exactly as its SPM approximation plus a residual error term, $E_1(f)$:
\[
\hat{z}(f) = \hat{z}_{\text{approx}}(f) + E_1(f)
\]
From the theory of the Method of Stationary Phase, we know that this error is uniformly bounded:
\[
\|E_1(f)\|_\infty = \order{\lambda^{-1}}
\]

\paragraph{Decomposition of the Output Signal.}
Now, substitute this exact expression into the formula for the output $y(t)$:
\[
y(t) = \int_{-\infty}^{\infty} H(f) \left[ \hat{z}_{\text{approx}}(f) + E_1(f) \right] e^{\mathrm{i} 2\pi f t} df
\]
By linearity of the integral, we can split this into two parts:
\[
y(t) = \underbrace{\int_{-\infty}^{\infty} H(f) \hat{z}_{\text{approx}}(f) e^{i2\pi f t} df}_{\text{Term 1}} + \underbrace{\int_{-\infty}^{\infty} H(f) E_1(f) e^{i2\pi f t} df}_{\text{Term 2 (Propagated Error)}}
\]

\paragraph{Analysis of Term 1.}
Term 1 is precisely the integral we analyzed in the main proof (Steps 4 and 5). Applying the SPM to it yields our final approximation plus a new error term, $E_2(t)$:
\[
\text{Term 1} = A(t)H(f(t))e^{\mathrm{i} \lambda\psi(t)} + E_2(t)
\]
where $\|E_2(t)\|_\infty = \order{\lambda^{-1}}$.

\paragraph{Analysis of Term 2 (Propagated Error).}
Term 2, let's call it $E_{\text{prop}}(t)$, is the inverse Fourier transform of the product of the filter response and the first-stage error. We need to bound its magnitude:
\[
|E_{\text{prop}}(t)| = \left| \int_{-\infty}^{\infty} H(f) E_1(f) e^{\mathrm{i} 2\pi f t} df \right|
\]
Using the triangle inequality for integrals and the fact that $|e^{\mathrm{i} 2\pi f t}| = 1$:
\[
|E_{\text{prop}}(t)| \le \int_{-\infty}^{\infty} |H(f)| |E_1(f)| df
\]
We know that $|E_1(f)| \le C_1 \lambda^{-1}$ for some constant $C_1$. Furthermore, by hypothesis, $H(f)$ has a compact support (or is rapidly decaying and integrable) and is bounded, $|H(f)| \le M_H$. Therefore, we can bound the integral:
\begin{align*}
|E_{\text{prop}}(t)| &\le \int_{\supp(H)} M_H \cdot (C_1 \lambda^{-1}) df \\
&\le M_H C_1 \lambda^{-1} \int_{\supp(H)} 1 \, df
\end{align*}
The remaining integral, $\int_{\supp(H)} 1 \, df$, is simply the measure (or effective bandwidth) of the support of $H(f)$, which is a finite constant, let's call it $L_H$. This gives the final bound:
\[
|E_{\text{prop}}(t)| \le (M_H C_1 L_H) \lambda^{-1} \implies \|E_{\text{prop}}(t)\|_\infty = \order{\lambda^{-1}}
\]

\paragraph{Conclusion on Total Error.}
The exact output $y(t)$ is the sum of our findings:
\[
y(t) = \left( A(t)H(f(t))e^{i\lambda\psi(t)} + E_2(t) \right) + E_{\text{prop}}(t)
\]
The total residual $R_\lambda(t)$ is the sum of the two error terms:
\[
R_\lambda(t) = E_2(t) + E_{\text{prop}}(t)
\]
Since both $\|E_2(t)\|_\infty$ and $\|E_{\text{prop}}(t)\|_\infty$ are of order $\order{\lambda^{-1}}$, their sum is also of order $\order{\lambda^{-1}}$. 
This confirms that the error from the first stage does not increase the overall order of error in the final result.

\end{document}